
\documentclass[conference,letterpaper]{IEEEtran}



\usepackage{multirow}
\usepackage[framed,autolinebreaks,useliterate]{mcode}
\lstset{
        literate={~}{{\fontfamily{ptm}\selectfont $\sim$}}1 
}
\usepackage{mathtools}
\usepackage{nicematrix}
\usepackage{float}
\usepackage{makecell}
\usepackage{tabularx}
\usepackage{subcaption}
\usepackage{booktabs}
\usepackage{cancel}
\usepackage{enumitem} 
\usepackage{ytableau}
\usepackage{graphicx} 
\graphicspath{ {images/} }

\usepackage{color, colortbl}
\usepackage{multirow}

\usepackage{cuted}
\usepackage[linesnumbered,ruled,vlined]{algorithm2e} 

\usepackage[utf8]{inputenc}
\usepackage[english]{babel}
\usepackage{amsthm,amssymb}
\usepackage{amsmath}
\usepackage{bm}
\usepackage{optidef}
\usepackage{mathrsfs}

\usepackage[font=small,labelfont=bf,labelsep=space]{caption}
\captionsetup{%
  figurename=Fig.,
}

\usepackage{xcolor}
\def\BibTeX{{\rm B\kern-.05em{\sc i\kern-.025em b}\kern-.08em
    T\kern-.1667em\lower.7ex\hbox{E}\kern-.125emX}}

\usepackage[style=ieee,backend=biber]{biblatex}
\addbibresource{refs.bib}
\AtBeginBibliography{\footnotesize} 

\usepackage{tikz}
\usetikzlibrary{matrix,backgrounds,positioning}

\usetikzlibrary{calc,patterns,decorations.pathreplacing,decorations.pathmorphing,matrix,positioning,arrows.meta,fit,bending}
\usepackage{pgfplots}
\pgfplotsset{compat=1.16}

\theoremstyle{definition}
\newtheorem{example}{Example}
\newtheorem{theorem}{Theorem}
\newtheorem{proposition}{Proposition}
\newtheorem{lemma}{Lemma}
\newtheorem{corollary}{Corollary}

\newtheorem{definition}{Definition}

\newtheorem{remark}{Remark}



\DeclareMathOperator{\bin}{bin}
\DeclareMathOperator{\supp}{supp}

\DeclareMathOperator{\w}{w}
\DeclareMathOperator{\dist}{d}

\newcommand{\rank}{\mathrm{rank}}
\newcommand{\wm}{\w_{\min}}


\newcommand{\A}{\mathcal{A}}

\newcommand{\I}{\mathcal{I}}

\newcommand{\F}{\mathcal{F}}

\newcommand{\C}{\mathcal{C}}

\newcommand{\R}{\mathcal{R}}


\newcommand{\bA}{\mathbf{A}}
\newcommand{\bB}{\mathbf{B}}
\newcommand{\bH}{\mathbf{H}}
\newcommand{\bM}{\mathbf{M}}
\newcommand{\bId}{\mathbf{Id}}
\newcommand{\bQ}{\mathbf{Q}}

\newcommand{\bve}{\bm{\varepsilon}}

\newcommand{\bp}{\mathbf{p}}

\newcommand{\bP}{\mathbf{P}}
\newcommand{\bGN}{\boldsymbol{G}_N}
\newcommand{\bc}{\boldsymbol{c}}
\newcommand{\bu}{\boldsymbol{u}}
\newcommand{\bx}{\boldsymbol{x}}

\newcommand{\bv}{\boldsymbol{v}}
\newcommand{\bw}{\boldsymbol{w}}

\newcommand{\bG}{\boldsymbol{G}}
\newcommand{\ind}{\operatorname{ind}}
\newcommand{\terms}{\operatorname{terms}}
\newcommand{\ev}{\operatorname{ev}}

\newcommand{\ff}{\mathbb{F}}
\newcommand{\ft}{\mathbb{F}_2}





\newcommand{\Alow}{{\rm LTA}(m,2)}

\newcommand{\GL}{{\rm GL}(m,2)}
\newcommand{\Mon}{\mathcal{M}_{m}}
\newcommand{\weako}{\preceq_{\mathbf{w}}}
\newcommand{\Rm}{{\mathbf {R}}_m}

\DeclareFontFamily{U}{mathx}{}
\DeclareFontShape{U}{mathx}{m}{n}{<-> mathx10}{}
\DeclareSymbolFont{mathx}{U}{mathx}{m}{n}
\DeclareMathAccent{\widecheck}{0}{mathx}{"71}
\definecolor{highlightcolor}{rgb}{1.0, 0.8, 0.8} 

\IEEEoverridecommandlockouts                              



\title{
 Algebraic Properties of PAC Codes
}


\author{

\IEEEauthorblockN{Vlad-Florin Dr\u{a}goi}
\IEEEauthorblockA{
Aurel Vlaicu University\\
Arad, Romania\\
Email: vlad.dragoi@uav.ro}
\and
\IEEEauthorblockN{Mohammad Rowshan}
\IEEEauthorblockA{
University of New South Wales\\ 
Sydney, Australia\\ 
Email: m.rowshan@unsw.edu.au}
}



\pagenumbering{arabic}

\begin{document}

\maketitle
\pagestyle{empty}

\begin{abstract}
We analyze polarization-adjusted convolutional codes using the algebraic representation of polar and Reed-Muller codes. We define a large class of codes, called generalized polynomial polar codes which include PAC codes and Reverse PAC codes. We derive structural properties of generalized polynomial polar codes, such as duality, minimum distance. We also deduce some structural limits in terms of number of minimum weight codewords, and dimension of monomial sub-code.   
\end{abstract}
\begin{IEEEkeywords}
Polynonomial codes, polar codes, PAC codes, monomial subcode, dual code.
\end{IEEEkeywords}


\section{Introduction}
Polar codes 
\cite{arikan} are members of a larger family of codes called decreasing monomial codes (DMC) \cite{bardet2016crypt}, which also includes Reed--Muller codes. The algebraic formalism in \cite{bardet} revealed that  dual of a DMC is still a DMC, the minimum distance of DMC is directly related to its maximum degree monomials, and the permutation group of DMCs contains the lower-triangular affine group (LTA). These properties have since been exploited in code construction \cite{mondelli-construction,Wang-Dragoi2023}, cryptography \cite{bardet2016crypt,dragoi-2019}, closed-form and algorithmic weight enumeration \cite{vlad1.5d,rowshan2024weight,yao,liu2024method}, and automorphism-ensemble decoding \cite{PBL21}. However, this algebraic framework presently covers only decreasing monomial codes; more general monomial or polynomial-based families lie outside its scope, despite their increasing relevance. The arrival of polarization-adjusted convolutional (PAC) codes \cite{arikan2}, reverse PAC codes \cite{gu2024reverse}, and profile-shifted PAC codes \cite{gu2025pac} calls for a more refined understanding of the structural properties of these families. 

PAC codes enhance the finite-length performance of polar codes by concatenating a one-to-one convolutional pre-transform with the polar transform, thereby preserving channel polarization while effectively reducing the number of low-weight codewords, compared to plain polar codes, and improving error-correction performance at short block lengths, under near-ML decoding (e.g., sequential or list decoding), close to theoretical limits. 
Yet, existing work on PAC and reverse PAC codes has mainly focused on performance and approximate distance spectra \cite{rowshan-pac1,rowshan-precoding,rowshan22EnuPAC}, or leveraging algebraic code properties to further enhance distance properties \cite{Arikan2020_PAC_Systematic,gu2023rate,Moradi2024_PAC_Cyclic}, and does not provide a systematic algebraic description of their duals or permutation groups. In particular, PAC codes are not decreasing monomial codes, are not closed under LTA permutations, and their codewords are not evaluations of polynomials in the span of the plain-polar information set~$\mathcal{A}$.


PAC codes exhibit several structural differences from plain polar codes: they are not decreasing monomial codes, their generator matrices are not isomorphic to the polar generator, they are not closed under LTA permutations, and their codewords are not evaluations of polynomials in the span of the polar information set~$\mathcal{A}$. In this work we place PAC codes inside a broader class of \emph{generalized polynomial polar codes} and reinterpret them as \emph{upper polynomial polar codes} using the algebraic representation of polar and Reed–Muller codes. Within this framework we show that, except at very low or very high rates, upper polynomial polar codes contain large decreasing-monomial and monomial subcodes; we characterize duals by first analyzing duals of monomial codes and then proving that the dual of an upper polynomial polar code is a lower polynomial polar code; and we derive lower bounds on the number of minimum-weight codewords, from which we conclude that a PAC code preserves the minimum distance of the underlying plain polar code.


\section{Decreasing monomial codes}%
We denote by $\ft$ the finite field with two elements. 
Also, subsets of consecutive integers are denoted by $[\ell,u]\triangleq\{\ell,\ell+1,\ldots,u\}$. 
The \emph{support} of a vector $\bc = [c_0,\ldots,c_{N-1}] \in \ft^N$ is defined as $\supp(\bc) \triangleq \{i \in [0,N-1] \mid c_i \neq 0\}$. 
The cardinality of a set is denoted by $|\cdot|$ and the set difference by $\backslash$. 
The Hamming \emph{weight} of a vector $\bc \in \ft^N$ is $\w(\bc)\triangleq |\supp(\bc)|$. 
The binary vectors $\bin(i)=(i_0,\dots,i_{m-1})\in \ft^m$, where $i=\sum_{j=0}^{m-1} i_j2^j$, are written so that $i_{m-1}$ is the most significant bit. In addition, we endow $\ft^m$ with the decreasing index order relation. The scalar product between two $N$-length vectors will be denoted $\bv\cdot \bw=\sum_{i=0}^{N-1}v_iw_i.$ Also, we will denote the Schur product (or component-wise product) of $\bv$ and $\bw$ by 
$\bv\odot\bw=(v_0w_0,\dots,v_{N-1}w_{N-1}).$ 

A $K$-dimensional subspace $\C$ of $\ft^N$ is called a linear $(N,K,d)$ \emph{code} over $\ft$ 
where $d$ is the minimum distance of code $\C$; that is,
$\dist(\C) \triangleq \min_{\bc,\bc' \in \C, \bc \neq \bc'} \dist(\bc,\bc')=\wm$, which is equal to minimum weight of codewords excluding all-zero codeword.
Let us collect all $\w$-weight codewords of $\C$ in set $W_{\w}$ as
$
    W_{\w}(\C) = \{ \bc\in\C \mid \w(\bc)=\w \}.
$
A \emph{generator matrix} $\bG$ of an $(N,K)$-code $\C$ is a $K \times N$ matrix in $\ft^{K\times N}$ whose rows are $\ft$-linearly independent codewords of $\C$. Then $\C = \{\bv \bG \colon \bv \in \ft^K\}$. We will denote matrices and vectors using bold symbols. The transpose of $\bG$ is denoted by $\bG^t.$

\paragraph{Multivariate monomial and polynomials}
Let $m$ be a fixed integer that represents the number of different variables $\mathbf{x}\triangleq(x_{0},\dots,x_{m-1})$ and $\ft[x_0,\dots{},x_{m-1}]$ be the set of polynomials in $m$ variables. 
Also, consider the ring $\Rm=\ft[x_0,\dots{},x_{m-1}]/(x_0^2-x_0,\dots,x_{m-1}^2-x_{m-1}).$

A particular basis for $\Rm$ is the monomial basis, where any $m$-variate monomial can be defined as $\mathbf{x}^{\bin(i)}=x_0^{i_0}\cdots x_{m-1}^{i_{m-1}},$
where $\bin(i)=(i_{0},\dots,i_{m-1})$ with $i_j\in\{0,1\}.$ 
Denote the set of all monomials in $\Rm$ by 
$
\Mon\!\triangleq\!\left\{\mathbf{x}^{\bin(i)} \mid\bin(i) \in \mathbb{F}_{2}^{m}\right\}
.$

Furthermore, the \emph{support of monomial} $f=x_{l_1}\dots x_{l_s}$ is $\ind(f)=\{l_1,\dots,l_s\}$ and its degree is $\deg(f)=|\ind(f)|.$ The degree induces a ranking on any monomial set $\I\subseteq\Mon$, that is, $\I=\bigcup_{j=0}^{m}\I_j$, where $\I_j=\{f\in\I\mid\deg(f)=j\}.$ Any polynomial will be written as the sum of its terms, i.e., $\terms(P)\subset\Mon, P=\sum_{f\in\terms(P)}f.$  

\paragraph{Monomial codes}
Define the evaluation function by 
\begin{equation}
\begin{array}[h]{ccccc}
\Rm    & \to &\ft^{2^m}\\
P& \mapsto &\ev(P) = \big(g(\bin(i)) \big)_{\bin(i) \in \ft^m}
\end{array}
\end{equation}


\begin{definition}
Let $\I\subseteq\Mon.$ A monomial code is the vector subspace $\C(\I) \subseteq \ft^{2^m}$ 
generated by $\{ \ev(f) ~|~ f \in \I\}$.\\
When $\I\subset \Rm$ is a set of polynomials we call $\C(\I)$ a polynomial code.  
\end{definition}

Let $\bG_{2^m} \triangleq \begin{pmatrix} 1 & 0 \\ 1 & 1 \end{pmatrix}^{\otimes m}$, where $\otimes m$ denotes the Kronecker power. 
In \cite{dragoi17thesis}, it was shown that $\bG_{2^m}$ is a basis for the vector space $\ft^{2^m}$, based on the mapping
\begin{equation}\label{eq:mapping}
\begin{array}[h]{ccccc}
[0,2^m-1]    & \to &\Mon   & \to &\ft^{2^m}\\
i& \mapsto &g\triangleq\mathbf{x}^{{\bin(2^m-1-i)}} & \mapsto &\bG_{2^m}[i]=\ev(g) 
\end{array}
\end{equation}

We call $i$ the index of $g=\mathbf{x}^{\bin(2^m-1-i)}$, e.g., for $m=3$ the index of $x_0x_2$ is $2$ since $\bin(7-2)=(1,0,1)$ and $\mathbf{x}^{(1,0,1)}=x_0^1x_1^0x_2^1=x_0x_2.$

The monomial code $\C(\I)$ is the code spanned by the matrix $\bG_{\{i_g\mid g\in \I\}}.$

{
\begin{center}
\setlength{\tabcolsep}{2pt}
\begin{tabular}{cccc|c|ccccc} 
 $i$&$\bin(i)$& $\bin(3-i)$ & $g$&&$\mathbf{z}=(z_0\;z_1):$ & 11 & 01 & 10 & 00 \\
\hline\hline 
0&$(0,0)$&$(1,1)$& $x_0x_1$& & $\mathrm{ev}\left(x_0 x_1\right)$ & 1 & 0 & 0 & 0 \\
1&$(1,0)$&$(0,1)$& $x_1$&  & $\mathrm{ev}\left(x_1\right)$ & 1 & 1 & 0 & 0 \\
2&$(0,1)$&$(1,0)$& $x_0$&  & $\mathrm{ev}\left(x_0\right)$ & 1 & 0 & 1 & 0 \\
3&$(1,1)$&$(0,0)$& $\bm{1}$&  & $\operatorname{ev}(\bm{1})$ & 1 & 1 & 1 & 1
\end{tabular}
\end{center}
}

Selecting rows of $\bGN$ is equivalent to selecting monomials from $\Mon.$ We use $\A$ to denote a set of indices and $\C_{\bGN}(\A)$ to denote the linear code generated by the set of rows of $\bGN$ indexed by $\A$, where $\A \subseteq [0,N-1]$. A polar code of length $N=2^n$ is constructed by selecting a set $\A$ of indices $i\in[0,N-1]$ corresponding to high reliability synthetic channels \cite{arikan}. $\A$ is used for information bits, while the rest of the synthetic channels (indices in $\A^c \triangleq [0,N-1]\setminus \A$) are used to transmit a known value, '0' by default, which are called \emph{frozen bits}. 
%
Any set of indices $\A$ maps into a monomial set $\I$ using \eqref{eq:mapping}, hence, we have $\C(\I)=\C_{\bG}(\A).$
{
\begin{align}
    \label{eq:i_f}
    \A &= \left\{\sum_{{\qquad j\in \ind(\widecheck{f})}}2^j  \;\mid \; f \in \I\right\} \\
\I &= \left\{\prod_{{\qquad\qquad j \in \supp(\bin(2^m-1-i))}} x_j  \;\mid\; i \in \A\right\}.
\end{align}

}

\paragraph{Decreasing monomial codes}\label{par:dec_monomial_codes}

Let ``$|$" denote the divisibility between monomials, i.e., $f|g$ iff $\ind(f)\subseteq\ind(g).$ 

\begin{definition}\label{def:order}Let $m$ be a positive integer and $f,g\in\Mon.$ Then $f\weako g$ if and only if $ f|g.$ When $\deg(f)=\deg(g)=s$ we say that $f\preceq_{\mathbf{sh}} g$ if $\forall\;1\leq\ell\leq s\;\text{ we have }\;  i_\ell \le j_\ell$, where $f=x_{i_1}\dots x_{i_s}$, $g=x_{j_1}\dots x_{j_s}$. 
Define $f\preceq g\quad \text{iff}\quad \exists g^*\in \Mon\;\text{s.t.}\; f\preceq_{\mathbf{sh}} g^*\weako g$.
\end{definition}

\begin{definition}\label{def:dec_set}
      A set $\I \subseteq \Mon$ is \emph{decreasing}  if and only if ($f \in \I$ and $g \preceq f $) implies $g \in \I$. 
      A decreasing interval $[1,f]_{\preceq}$ is a decreasing monomial set of $\{g\in\Mon\mid g \preceq f\}.$
\end{definition}

Any monomial code $\C(\I)$ with $\I$ decreasing is called \emph{decreasing monomial code}. Both Polar and Reed-Muller codes are decreasing monomial codes \cite{bardet}.

\paragraph{Permutation group}\label{ssec:perm_grp}
The set of permutations that globally leave the code invariant, forms the \emph{automorphism group} of the code $\C.$ 
A bijective affine transformation over $\ft^m$ is represented by a pair $(\bB, \bm{\varepsilon})$ where $\bB=(b_{i,j})\in \ff_2^{m\times m}$ is an invertible matrix lying in the general linear group $\GL$ and $\bve$ in $\ft^m$. The action of $(\bB, \bve)$ on a monomial $g=\prod_{i\in\ind(g)}x_i$ (denoted by $(\bB, \bve)\cdot g$) replaces each variable $x_i$ of $g$ by a variable $y_i$ as
$y_i = x_i + \sum_{j=0}^{i-1} b_{i,j} x_j + \varepsilon_i$. This new variable $y_i$, is in fact a linear form (a polynomial in which all terms have a degree at most 1), where the maximum variable is $x_{i}.$

In \cite{bardet} it was proved that the lower triangular affine transformation denoted by $\Alow$ is a subgroup of the permutation group of any decreasing monomial code. $\Alow$ is a subgroup of the affine group, where $\bB\in\GL$ is a lower triangular binary matrix with $b_{i,i}=1$ and $b_{i,j}=0$ whenever $j>i$. Therefore, the action of $\Alow$ can be expressed as the following mapping from $\ft^m$ to itself: 
$
\bx \rightarrow \bB \bx + \bve.
$
The set of polynomials resulting from the action of $\Alow$ on a monomial is collected in a set named \emph{orbit} and denoted by
$\Alow \cdot f = \{(\bB,\bve) \cdot f\mid (\bB,\bve) \in \Alow\}.$
    
Since $\Alow$ acts as a permutation on $\ev(f)$, all elements in $\Alow\cdot f$ have the same Hamming weight. 

\paragraph{Minimum Weight Codewords}\label{ssec:wmin_orbit}
Minimum weight codewords are counted using a particular subgroup of $\Alow$.

\begin{definition}[\cite{bardet,dragoi17thesis}] For any $g\in\Mon$ define $\Alow_g$ as the subgroup $\Alow$ by "collecting actions
 of $(\bB,\bve)$ that satisfy" 
 \vspace{-5pt}
\[
 \varepsilon_i = 0 \text{ if } i \not \in \ind(g) 
 ~~~\text{ and }~~ 
 b_{ij} = 
 \left \{ 
 	\begin{array}{l}
		0 \text{ if } i \not \in  \ind(g), \\
		0 \text{ if }  j \in \ind(g).
	 \end{array}
 \right.
\] 
\end{definition}

In \cite{bardet} it was demonstrated that $\Alow\cdot f=\Alow_f\cdot f$, hence, $\Alow_f\cdot f$ being a finer estimation of the orbit. This group can be written as a semi-direct product between the group of invertible matrices having one on the diagonal and the group of translations. A variable $x_i$ can be translated by a scalar $\varepsilon_i\in\ft$ onto $x_i+\varepsilon_i$. Hence, $f\in\Mon$ admits $2^{\deg(f)}$ translations. 
Regarding linear mapping, $x_i$ can be transformed into a "new variable" ($y_i$), where $y_i=x_i+\sum_{j=0, j \notin \operatorname{ind}(f)}^{i-1} b_{i,j} x_{j}$. The extra variables considered in $y_i$ express the degree of freedom we have on $x_i$ and are denoted by $\lambda_f(x_i)=|\{j\in[0,i) \mid j\notin\ind(f)\}|.$ To each monomial $f=x_{i_{0}}\dots x_{i_{s-1}}$ with $i_0<\dots< i_s$ one can associate a partition of length $\deg(f)$ defined by $\lambda_f=(i_{s-1}-(s-1),\dots,i_0-0)$ (see \cite{dragoi17thesis}). The total number of free variables on all $x_i$ in the support of $f$ is $|\lambda_f(f)|=\sum_{i\in\ind(f)}\lambda_f(x_i)$, (or simply $\lambda_f$) from which 
    $2^{|\lambda_f|}$ (since we are defined over $\ff_2$) possible actions on $f.$ 

    The definition can be extended to any monomial $g=x_{j_{0}}\dots x_{j_{l-1}}$ satisfying $g|f$, that is, $\lambda_f(g)$ is the partition of length $l$ defined by $\lambda_f(g)=(\lambda_f(x_i))_{i\in\ind(g)}$, which yields $ |\lambda_f(g)|=\sum_{i\in\ind(g)}\lambda_f(x_i).$ 

 Finally, we get the well known formula from \cite{bardet}
\begin{equation}\label{eq:A_wm}
    \left|\Alow_f \cdot f\right| = 2^{\deg(f)+|\lambda_f|},
\end{equation}
\begin{equation}\label{eq:sum_A_wm}
    |W_{2^{m-r}}(\I)|=\sum\limits_{f\in \I_r}2^{\deg(f)+|\lambda_f|}.
\end{equation}

\section{Generalized polynomial polar codes}
Let $\C_{\bGN}(\A)$ be a polar code with information set $\A$ with length $N=2^m$. The encoding step of a pure polar code is defined by the vector-matrix multiplication $\bv\bGN$ where $\bv$ is the rate-profile dependent vector, meaning that $v_i=0,\forall i\in \A^c.$ A PAC code applies a pre-transformation via a binary polynomial $\bp(x)$, effectively changing the input vector $\bv$ to $\bu=\bv \star \bp \bmod 2$, so that the codeword is $\bc=\bu \bGN=\bv \bP \bGN$, where $\bP\in \mathbb{F}_2^{N \times N}$ is the Toeplitz matrix of $\bp$, denoting exactly the rate-1 convolution operation 
\[\bP=\left(\begin{array}{ccccccc}
     p_0& p_1 & \dots & p_l & 0 &\dots &0 \\
     0 & p_0 & \dots &p_{l-1}& p_l & \dots &0\\
     \vdots &\ddots &\ddots & \ddots &\ddots &\ddots &\vdots\\
     0 & 0 & 0 & 0 & 0 & 0 &p_0
\end{array}\right),\] with $p_0=1.$ Its inverse
$
\bH=\bP^{-1}
$
implements the terminated deconvolution:
$
\bv=\bH \bu.
$

Here we will consider a broader class of polynomial codes, that we call \emph{Generalized polynomial polar codes}. 
\subsection{Definitions and Properties}
\begin{definition}
    Let $\C_{\bGN}(\A)$ be a polar code with information set $\A$ and let $\bP\in \ft^{N\times N}$ be an invertible matrix. The code $\C_{\bP\bGN}(\A)$ is called a Generalized polynomial polar code. 

    When $\bP$ is lower/upper triangular with  $P_{i,i}=1$ the code $\C_{\bP\bGN}(\A)$ is called lower/upper polynomial polar code.
\end{definition}

\begin{remark}
\begin{enumerate}
    \item A generalized polynomial polar code is by default a polynomial code. Rare cases, e.g., extremely low/high rate polar codes could generate Generalized polynomial polar codes that are monomial codes. Also, particular pairs $(\bP,K/N)$ generate monomial codes. 
    \item 
    PAC codes are a particular sub-class of generalized polynomial polar codes, where $\bP$ is the upper triangular Toeplitz matrix of $\bp.$  
    \item Generalized polynomial codes admit an inverse matrix $\bH=\bP^{-1}$. Identically, we can compute $\bv=\bH\bu.$
\end{enumerate}
\end{remark}

\begin{table*}[!h]
    \centering
    \resizebox{0.98\textwidth}{!}{
    \begin{tabular}{|c|c|c||c|c|c|}
\toprule
\multicolumn{3}{|c|}{Indices, monomials and polynomials}&\multicolumn{3}{|c|}{Their complements}\\
\midrule
\textbf{Row-Index $i$} & \makecell{\textbf{Monomial $f$ s.t. $\bG[i]=\ev(f)$ }} & \makecell{\textbf{Polynomial $P$ s.t. $\bP\bGN[i]=\ev(P)$}} & \textbf{$2^m-1-i$} & \makecell{\textbf{ $\widecheck{f}$}} & \makecell{\textbf{ $\widecheck{P}$}} \\
\midrule

\rowcolor{highlightcolor}0 & $x_0x_1x_2x_3x_4x_5$
& $x_0x_1x_2x_3x_4x_5 + x_0x_2x_3x_4x_5 + x_2x_3x_4x_5$
& 63 & $1$
& $x_0x_1 + x_1 + 1$ \\ \hline

\rowcolor{highlightcolor}1 & $x_1x_2x_3x_4x_5$
& $x_0x_1x_3x_4x_5 + x_1x_2x_3x_4x_5 + x_2x_3x_4x_5$
& 62 & $x_0$
& $x_0x_1 + x_0 + x_2$ \\ \hline

\rowcolor{highlightcolor}2 & $x_0x_2x_3x_4x_5$
& $x_0x_1x_3x_4x_5 + x_0x_2x_3x_4x_5 + x_1x_3x_4x_5$
& 61 & $x_1$
& $x_0x_2 + x_1 + x_2$ \\ \hline

\rowcolor{highlightcolor}3 & $x_2x_3x_4x_5$
& $x_0x_3x_4x_5 + x_1x_3x_4x_5 + x_2x_3x_4x_5$
& 60 & $x_0x_1$
& $x_0x_1 + x_0x_2 + x_1x_2$ \\ \hline

\rowcolor{highlightcolor}4 & $x_0x_1x_3x_4x_5$
& $x_0x_1x_3x_4x_5 + x_0x_3x_4x_5 + x_3x_4x_5$
& 59 & $x_2$
& $x_0x_1x_2 + x_1x_2 + x_2$ \\ \hline

\rowcolor{highlightcolor}5 & $x_1x_3x_4x_5$
& $x_0x_1x_2x_4x_5 + x_1x_3x_4x_5 + x_3x_4x_5$
& 58 & $x_0x_2$
& $x_0x_1x_2 + x_0x_2 + x_3$ \\ \hline

\rowcolor{highlightcolor}6 & $x_0x_3x_4x_5$
& $x_0x_1x_2x_4x_5 + x_0x_3x_4x_5 + x_1x_2x_4x_5$
& 57 & $x_1x_2$
& $x_0x_3 + x_1x_2 + x_3$ \\ \hline

7 & $x_3x_4x_5$
& $x_0x_2x_4x_5 + x_1x_2x_4x_5 + x_3x_4x_5$
& 56 & $x_0x_1x_2$
& $x_0x_1x_2 + x_0x_3 + x_1x_3$ \\ \hline

\rowcolor{highlightcolor}8 & $x_0x_1x_2x_4x_5$
& $x_0x_1x_2x_4x_5 + x_0x_2x_4x_5 + x_2x_4x_5$
& 55 & $x_3$
& $x_0x_1x_3 + x_1x_3 + x_3$ \\ \hline

\rowcolor{highlightcolor}9 & $x_1x_2x_4x_5$
& $x_0x_1x_4x_5 + x_1x_2x_4x_5 + x_2x_4x_5$
& 54 & $x_0x_3$
& $x_0x_1x_3 + x_0x_3 + x_2x_3$ \\ \hline

\rowcolor{highlightcolor}10 & $x_0x_2x_4x_5$
& $x_0x_1x_4x_5 + x_0x_2x_4x_5 + x_1x_4x_5$
& 53 & $x_1x_3$
& $x_0x_2x_3 + x_1x_3 + x_2x_3$ \\ \hline

11 & $x_2x_4x_5$
& $x_0x_4x_5 + x_1x_4x_5 + x_2x_4x_5$
& 52 & $x_0x_1x_3$
& $x_0x_1x_3 + x_0x_2x_3 + x_1x_2x_3$ \\ \hline

\rowcolor{highlightcolor}12 & $x_0x_1x_4x_5$
& $x_0x_1x_4x_5 + x_0x_4x_5 + x_4x_5$
& 51 & $x_2x_3$
& $x_0x_1x_2x_3 + x_1x_2x_3 + x_2x_3$ \\ \hline

13 & $x_1x_4x_5$
& $x_0x_1x_2x_3x_5 + x_1x_4x_5 + x_4x_5$
& 50 & $x_0x_2x_3$
& $x_0x_1x_2x_3 + x_0x_2x_3 + x_4$ \\ \hline

14 & $x_0x_4x_5$
& $x_0x_1x_2x_3x_5 + x_0x_4x_5 + x_1x_2x_3x_5$
& 49 & $x_1x_2x_3$
& $x_0x_4 + x_1x_2x_3 + x_4$ \\ \hline

15 & $x_4x_5$
& $x_0x_2x_3x_5 + x_1x_2x_3x_5 + x_4x_5$
& 48 & $x_0x_1x_2x_3$
& $x_0x_1x_2x_3 + x_0x_4 + x_1x_4$ \\ \hline

\rowcolor{highlightcolor}16 & $x_0x_1x_2x_3x_5$
& $x_0x_1x_2x_3x_5 + x_0x_2x_3x_5 + x_2x_3x_5$
& 47 & $x_4$
& $x_0x_1x_4 + x_1x_4 + x_4$ \\ \hline

\rowcolor{highlightcolor}17 & $x_1x_2x_3x_5$
& $x_0x_1x_3x_5 + x_1x_2x_3x_5 + x_2x_3x_5$
& 46 & $x_0x_4$
& $x_0x_1x_4 + x_0x_4 + x_2x_4$ \\ \hline

\rowcolor{highlightcolor}18 & $x_0x_2x_3x_5$
& $x_0x_1x_3x_5 + x_0x_2x_3x_5 + x_1x_3x_5$
& 45 & $x_1x_4$
& $x_0x_2x_4 + x_1x_4 + x_2x_4$ \\ \hline

19 & $x_2x_3x_5$
& $x_0x_3x_5 + x_1x_3x_5 + x_2x_3x_5$
& 44 & $x_0x_1x_4$
& $x_0x_1x_4 + x_0x_2x_4 + x_1x_2x_4$ \\ \hline

20 & $x_0x_1x_3x_5$
& $x_0x_1x_3x_5 + x_0x_3x_5 + x_3x_5$
& 43 & $x_2x_4$
& $x_0x_1x_2x_4 + x_1x_2x_4 + x_2x_4$ \\ \hline

21 & $x_1x_3x_5$
& $x_0x_1x_2x_5 + x_1x_3x_5 + x_3x_5$
& 42 & $x_0x_2x_4$
& $x_0x_1x_2x_4 + x_0x_2x_4 + x_3x_4$ \\ \hline

22 & $x_0x_3x_5$
& $x_0x_1x_2x_5 + x_0x_3x_5 + x_1x_2x_5$
& 41 & $x_1x_2x_4$
& $x_0x_3x_4 + x_1x_2x_4 + x_3x_4$ \\ \hline

23 & $x_3x_5$
& $x_0x_2x_5 + x_1x_2x_5 + x_3x_5$
& 40 & $x_0x_1x_2x_4$
& $x_0x_1x_2x_4 + x_0x_3x_4 + x_1x_3x_4$ \\ \hline

24 & $x_0x_1x_2x_5$
& $x_0x_1x_2x_5 + x_0x_2x_5 + x_2x_5$
& 39 & $x_3x_4$
& $x_0x_1x_3x_4 + x_1x_3x_4 + x_3x_4$ \\ \hline

25 & $x_1x_2x_5$
& $x_0x_1x_5 + x_1x_2x_5 + x_2x_5$
& 38 & $x_0x_3x_4$
& $x_0x_1x_3x_4 + x_0x_3x_4 + x_2x_3x_4$ \\ \hline

26 & $x_0x_2x_5$
& $x_0x_1x_5 + x_0x_2x_5 + x_1x_5$
& 37 & $x_1x_3x_4$
& $x_0x_2x_3x_4 + x_1x_3x_4 + x_2x_3x_4$ \\ \hline

27 & $x_2x_5$
& $x_0x_5 + x_1x_5 + x_2x_5$
& 36 & $x_0x_1x_3x_4$
& $x_0x_1x_3x_4 + x_0x_2x_3x_4 + x_1x_2x_3x_4$ \\ \hline

28 & $x_0x_1x_5$
& $x_0x_1x_5 + x_0x_5 + x_5$
& 35 & $x_2x_3x_4$
& $x_0x_1x_2x_3x_4 + x_1x_2x_3x_4 + x_2x_3x_4$ \\ \hline

29 & $x_1x_5$
& $x_0x_1x_2x_3x_4 + x_1x_5 + x_5$
& 34 & $x_0x_2x_3x_4$
& $x_0x_1x_2x_3x_4 + x_0x_2x_3x_4 + x_5$ \\ \hline

30 & $x_0x_5$
& $x_0x_1x_2x_3x_4 + x_0x_5 + x_1x_2x_3x_4$
& 33 & $x_1x_2x_3x_4$
& $x_0x_5 + x_1x_2x_3x_4 + x_5$ \\ \hline

31 & $x_5$
& $x_0x_2x_3x_4 + x_1x_2x_3x_4 + x_5$
& 32 & $x_0x_1x_2x_3x_4$
& $x_0x_1x_2x_3x_4 + x_0x_5 + x_1x_5$ \\ \hline

\rowcolor{highlightcolor} 32 & $x_0x_1x_2x_3x_4$ 
& $x_0x_1x_2x_3x_4 + x_0x_2x_3x_4 + x_2x_3x_4$ 
& 31 & $x_5$ & $x_5+x_1x_5+x_0x_1x_5$\\\hline

\rowcolor{highlightcolor} 33 & $x_1x_2x_3x_4$ 
& $x_0x_1x_3x_4 + x_1x_2x_3x_4 + x_2x_3x_4$ 
& 30 & $x_0x_5$ & $x_2x_5+x_0x_5+x_0x_1x_5$\\\hline

34 & $x_0x_2x_3x_4$ 
& $x_0x_1x_3x_4 + x_0x_2x_3x_4 + x_1x_3x_4$ 
& 29 & $x_1x_5$ & $x_2x_5+x_1x_5+x_0x_2x_5$\\
\bottomrule

\end{tabular}
}
    \caption{Algebraic form of the first 35 rows of $\bG$ (monomial code) and $\bP\bGN$ (upper polynomial code) where the convolution polynomial is defined by $[1,0,1,1].$ Red rows are those in $\F$ for the $[64,48]$ polar code. All remaining rows are in the information set $\forall\;i\geq 35, i\in\A.$ }
    \label{tab:PAC-polar-monomials}
\end{table*}

Polar codes and generalized polynomial polar codes do share common elements, i.e., their intersection is non-trivial. 
\begin{example}\label{ex:m6-pac-1}
    Let $m=6,N=64$ and $k=48.$ The frozen set in this case is \[\A^c=\{ 0, 1, 2, 3, 4, 5, 6, 8, 9, 10, 12, 16, 17, 18, 32, 33 \}.\] 

    Consider the PAC code defined by the polynomial $\bp=[1,0,1,1]$. The corresponding PAC code is a polynomial code with the following properties:
    \begin{itemize} 
        \item $\C_{\bP\bGN}(\A)$ contains a decreasing monomial subcode of dimension 30, defined by $\C([1,x_0x_2x_3x_4]_{\preceq}).$ 
         \item $\C_{\bP\bGN}(\A)$ contains a monomial subcode of dimension 32, more exactly $\C([1,x_0x_2x_3x_4]_{\preceq})\cup \{x_2x_5,x_0x_1x_3x_5\}.$ The reason $x_2x_5$ belongs to the monomial set of the PAC code is because
         \begin{itemize}
             \item $x_2x_5$ was transformed by $\bp$ into $x_2x_5+x_1x_5+x_0x_5$
             \item $x_1x_5$ was transformed into $x_1x_5+x_5+x_0x_1x_2x_3x_4$
             \item $x_0x_5$ was transformed into 
             $x_0x_5+x_1x_2x_3x_4+x_0x_1x_2x_3x_4$
             \item $x_5$ was transformed into 
             $x_5+x_1x_2x_3x_4+x_0x_2x_3x_4$
         \end{itemize}  
         Overall we get that $x_2x_5+x_1x_5+x_0x_5+x_5$ was transformed by $\bp$ into $x_2x_5+x_0x_2x_3x_4$, and since $x_0x_2x_3x_4$ is already a monomial then $x_2x_5$ has to be a monomial of the PAC code. The same can be deduced for $x_0x_1x_3x_5.$
          \item The intersection code $\C_{\bG}(\A)\cap\C_{\bP\bGN}(\A)$ is a code with dimension 42. 
    \end{itemize}
\end{example}

\subsection{Upper polynomial polar codes}
One can notice that any upper polynomial polar code contains a decreasing monomial code, with dimension $N-1-\max(\A^c)$, where $\max(\A^c)$ is the index of the last frozen monomial. In Example \ref{ex:m6-pac-1} the last frozen index was $33$ and $N-1-33=63-33=30.$ However, some upper polynomial polar code might contain larger decreasing monomial codes. 

\begin{figure}[!ht]
    \centering 
    \resizebox{0.45\textwidth}{!}{
    \begin{tikzpicture}
            \begin{axis}[width=\textwidth,
                    ylabel={$\dim(\C_{\bGN}(\A)\cap \C_{\bP\bGN}(\A))$},
                    xlabel={$k$},
                    ytick={10,15,20,25,30,35,40},
                    ymin=7.9,
                    ymax=41.1,
                    xtick={15,20,25,30,35,40,45,50},
                    xmin=13.9,
                    xmax=50.1,
                    legend style={at={(0.05,0.95)},anchor=north west},
                    grid=both
                    ]
            \addplot+[solid,mark=none,red] table [x, y, col sep=comma]{FiguresISIT2026/Generalized-Avr-polar.txt};\addlegendentry{Average Generalized polynomial}\label{average-polar}
            \addplot+[solid,mark=o,blue] table [x, y, col sep=comma]{FiguresISIT2026/f11-polar.txt};\addlegendentry{$\bp=[1,1]$}\label{p11-monomial}
             \addplot+[solid,mark=diamond,blue] table [x, y, col sep=comma]{FiguresISIT2026/f101-polar.txt};\addlegendentry{$\bp=[1,0,1]$}\label{p101-monomial}
              \addplot+[solid,mark=star,blue] table [x, y, col sep=comma]{FiguresISIT2026/f1011-polar.txt};\addlegendentry{$\bp=[1,0,1,1]$}\label{p1011-monomial}
               \addplot+[solid,mark=oplus,blue] table [x, y, col sep=comma]{FiguresISIT2026/f1111-polar.txt};\addlegendentry{$\bp=[1,1,1,1]$}\label{p1111-monomial}
                \addplot+[solid,mark=square,blue] table [x, y, col sep=comma]{FiguresISIT2026/f10101-polar.txt};\addlegendentry{$\bp=[1,0,1,0,1]$}\label{p10101-monomial}
                 \addplot+[solid,mark=triangle,blue] table [x, y, col sep=comma]{FiguresISIT2026/f1011011-polar.txt};\addlegendentry{$\bp=[1,0,1,1,0,1,1]$}\label{p1011011-monomial}
                  \addplot+[solid,mark=10-pointed star,blue] table [x, y, col sep=comma]{FiguresISIT2026/f101010101-polar.txt};\addlegendentry{$\bp=[1,0,1,0,1,0,1,0,1]$}\label{p101010101-monomial}
                  \end{axis}
        \end{tikzpicture}
        }
     \caption{Dimension of the intersection code between the PAC code and polar codes for $N=64$ and $14\leq k \leq 50.$}\label{fig:PAC-polar-intersect}
\end{figure}
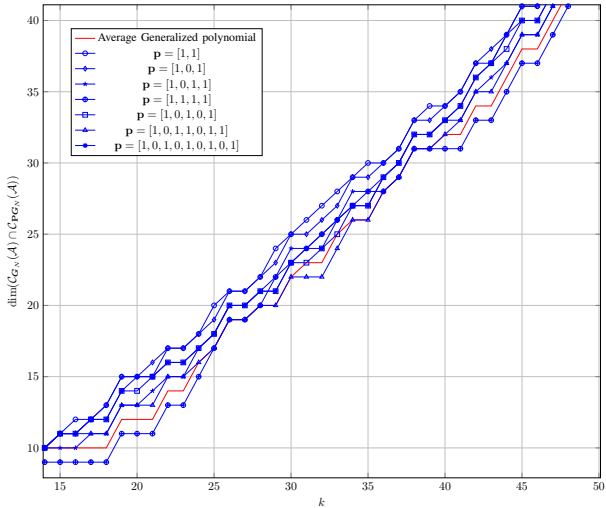



\begin{proposition}
    Let $\C_{\bP\bGN}(\A)$ be a upper polynomial polar code. Then, $\C_{\bG}([\max(\A^c)+1, N-1])$ is a decreasing monomial code with dimension $N-1-\max(\A^c)$ satisfying $\C_{\bG}([\max(\A^c)+1, N-1])\subset \C_{\bP\bGN}(\A).$  
\end{proposition}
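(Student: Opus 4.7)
The plan is to exploit the upper triangular structure of $\bP$ to identify a block of $\bP\bGN$ that actually coincides (as a row space) with the corresponding block of $\bGN$, and then to verify that the resulting monomial set is decreasing.

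First, since $\bP$ is upper triangular with $P_{i,i}=1$, the $i$-th row of $\bP\bGN$ can be written as
\[
(\bP\bGN)[i] \;=\; \bGN[i] + \sum_{j>i} P_{i,j}\,\bGN[j].
\]
Let $t\triangleq\max(\A^c)$. For every $i\in[t+1,N-1]$ all indices $j\geq i$ satisfy $j>t$ and therefore lie in $\A$. So every row $(\bP\bGN)[i]$ with $i\in[t+1,N-1]$ is a $\ft$-linear combination of rows $\bGN[j]$ with $j\in[t+1,N-1]$. Conversely, the restriction of $\bP$ to the block indexed by $[t+1,N-1]$ is itself upper triangular with unit diagonal, hence invertible over $\ft$, which gives the equality of row spaces
\[
\C_{\bP\bGN}([t+1,N-1]) \;=\; \C_{\bGN}([t+1,N-1]).
\]
Since $[t+1,N-1]\subseteq\A$, this subspace is contained in $\C_{\bP\bGN}(\A)$, which proves the inclusion part of the statement.

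The dimension count is immediate: the $N-1-t$ rows of $\bGN$ indexed by $[t+1,N-1]$ are $\ft$-linearly independent (they are distinct rows of the invertible matrix $\bGN$), so $\dim\C_{\bGN}([t+1,N-1])=N-1-\max(\A^c)$.

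The only nontrivial part is showing that the monomial set associated via \eqref{eq:mapping} to $[t+1,N-1]$ is decreasing. For this I would prove the auxiliary claim: \emph{if $g\preceq f$ then $i_g\geq i_f$}. By Definition~\ref{def:order} it suffices to handle divisibility and shift separately. For divisibility, $g\mid f$ means $\ind(g)\subseteq\ind(f)$, hence $\bin(2^m-1-i_g)\subseteq\bin(2^m-1-i_f)$ bitwise, which forces $2^m-1-i_g\leq 2^m-1-i_f$, i.e., $i_g\geq i_f$. For shift, writing $\ind(g)=\{i_1<\dots<i_s\}$ and $\ind(g^*)=\{j_1<\dots<j_s\}$ with $i_\ell\leq j_\ell$, a straightforward induction on $s$ gives $\sum 2^{i_\ell}\leq\sum 2^{j_\ell}$, hence $i_g\geq i_{g^*}$. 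Combining the two cases yields the claim. Once this is established, closure is automatic: for any $f$ with $i_f>t$ and any $g\preceq f$ we have $i_g\geq i_f>t$, so $g$ is again indexed in $[t+1,N-1]$, proving the set is decreasing.

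The main obstacle is the monotonicity claim $g\preceq f\Rightarrow i_g\geq i_f$, because $\preceq$ is only a partial order whereas the index ordering is total; the bitwise argument for divisibility and the induction for shift are both short but must be set up carefully with the convention $i_g=\sum_{\ell\in\ind(\widecheck g)}2^\ell$ used in \eqref{eq:i_f}. Everything else reduces to linear algebra over the upper triangular block of $\bP$.
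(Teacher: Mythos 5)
Your proposal is correct, and for the part the paper actually argues---that the index set $[\max(\A^c)+1,N-1]$ corresponds to a decreasing monomial set---you use essentially the same idea: monotonicity of the row index under $\preceq$, checked separately for divisibility and shift and then composed through the intermediate monomial $g^*$ (the paper runs the same comparison as a proof by contradiction). Where you go beyond the paper is the inclusion $\C_{\bGN}([\max(\A^c)+1,N-1])\subset\C_{\bP\bGN}(\A)$: the paper treats this as evident, whereas you make it explicit by observing that the rows of $\bP\bGN$ indexed by $[\max(\A^c)+1,N-1]$ are obtained from the corresponding rows of $\bGN$ through the bottom-right block of $\bP$, which is upper triangular with unit diagonal and hence invertible over $\ft$, giving equality of the two row spaces inside $\C_{\bP\bGN}(\A)$. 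That extra step is worth keeping, since it is exactly the point at which the upper-triangularity hypothesis on $\bP$ enters; the rest (linear independence of distinct rows of $\bGN$ for the dimension count, and the index convention $i_g=\sum_{\ell\in\ind(\widecheck g)}2^\ell$ from \eqref{eq:mapping}) matches the paper.
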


\begin{proof}
First let us denote the set of monomials corresponding to $\A$ by $\I$ and the set of monomials corresponding to $[\max(\A^c)+1, N-1]$ by $\I_{\F}.$
The dimension of $\C_{\bG}(\A_{c})$ equals $N-1-\max(\A^c)$ since all monomials in the set $\I_{\F}=\{g\mid \ev(g)=\bGN[i],i \in \A_{c}\}$ are linearly independent. Let us prove that $\C_{\bG}(\A_{c})$ is a decreasing monomial code. By definition, the monomial basis of $\C_{\bG}(\A_{c})$ consists of all monomials corresponding to consecutive integers in the set $\A_{c}.$  Now, suppose there is a monomial $f\preceq g$ with $f\not\in\I_{\F}$ and $g\in \I_{\F}.$ Notice that we can not have $f\preceq_w g$, since this would imply $\ind(f)\subset \ind(g)\Rightarrow 2^m-1-i_f<2^m-1-i_g\Rightarrow i_f>i_g$ which contradicts $f\not\in\I_{\F}$ or equivalently $i_f\not \in [\max(\A^c)+1, N-1].$ If $f\preceq_{\mathbf{sh}} g$, by definition we have $2^m-1-i_f<2^m-1-i_g\Rightarrow i_f>i_g$ which again contradicts $f\not\in\I_{\F}.$        
\end{proof}

We have conducted simulations on the polar code of length $64$ and generalized polynomial polar codes. We have computed the intersection of the plain polar codes and the PAC codes reported in Figure \ref{fig:PAC-polar-intersect} as well as set of $1000$ random upper polynomial polar codes. We have computed the intersection code $\C_{\bGN}(\A)\cap\C_{\bP\bGN}(\A)$, which is a polynomial code, it contains a monomial basis plus some polynomials/codewords (which are not evaluation of monomials) from the polar code. For the $1000$ random polynomial polar codes we have computed the average dimension and reported by the red straight line. Notice that most of the PAC codes have an intersection above the average of generalized polynomial polar codes. In other words, an upper polynomial polar code is more likely to break the polar structure, except for some specific convolution polynomials $\bp.$   

PAC codes are sub-codes of some higher dimension monomial codes. The dimension of these monomial codes depending on the convolutional polynomials as well as the code rates. However, there is an upper bound on their dimension. 

\begin{lemma}\label{lem:max-monom}
    Let $\C_{\bP\bGN}(\A)$ be a upper polynomial polar code. Then, $\C_{\bP\bGN}(\A)\subset \C_{\bG}([\min(\A),N-1]).$ 
\end{lemma}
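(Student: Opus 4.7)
The plan is to observe that the statement follows almost immediately from the upper-triangular structure of $\bP$, by unpacking what ``row $i$ of $\bP\bGN$'' means as a linear combination of rows of $\bGN$. The only thing to verify carefully is the index bookkeeping.

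First I would write the generator matrix of $\C_{\bP\bGN}(\A)$ explicitly. Its rows are $(\bP\bGN)[i,:]$ for $i\in\A$, and by the definition of matrix multiplication
\[
(\bP\bGN)[i,:]=\sum_{j=0}^{N-1}P_{i,j}\,\bGN[j,:].
\]
Because $\bP$ is upper triangular with $P_{i,i}=1$, we have $P_{i,j}=0$ whenever $j<i$, so the sum collapses to $\sum_{j=i}^{N-1}P_{i,j}\,\bGN[j,:]$. This shows that $(\bP\bGN)[i,:]$ lies in the linear span of $\{\bGN[j,:]:j\in[i,N-1]\}$, i.e.\ in $\C_{\bG}([i,N-1])$.

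Next I would push this inclusion up to $\min(\A)$. For every $i\in\A$ one has $i\geq \min(\A)$, hence $[i,N-1]\subseteq [\min(\A),N-1]$ and consequently $\C_{\bG}([i,N-1])\subseteq \C_{\bG}([\min(\A),N-1])$. Therefore every generator row of $\C_{\bP\bGN}(\A)$ belongs to $\C_{\bG}([\min(\A),N-1])$, and since the latter is a linear code, the full span $\C_{\bP\bGN}(\A)$ is contained in it as well.

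There is no real obstacle here: the result is essentially a one-line consequence of upper triangularity, and no appeal to the decreasing-monomial machinery or to the information set structure is needed. The only point worth flagging is the row-index convention used by the paper, namely that the rows of $\bGN$ are indexed so that $\bGN[i]=\ev(\mathbf{x}^{\bin(2^m-1-i)})$; with this convention, ``upper triangular'' on $\bP$ exactly means $P_{i,j}\neq 0\Rightarrow j\geq i$, which is what makes the argument work as stated.
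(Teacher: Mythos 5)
Your proof is correct and is exactly the argument the paper has in mind: the paper dismisses this lemma as ``trivial from the definition,'' and your expansion of row $i$ of $\bP\bGN$ as $\sum_{j\ge i}P_{i,j}\bGN[j]$ (using upper triangularity) followed by $i\ge\min(\A)$ is precisely that trivial argument spelled out. No gaps; the only cosmetic point is that the containment you establish is $\subseteq$, which matches the paper's intended (non-strict) use of $\subset$.
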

The proof is trivial and comes form the definition of a polynomial polar code. Notice that the dimension of this monomial code $\C_{\bG}([\min(\A),N-1])$ equals $N-\min(\A).$ 

\begin{figure}[h]
\centering
\resizebox{0.48\textwidth}{!}{
          \begin{tikzpicture}
            \begin{axis}[width=\textwidth,
                    ylabel={},
                    xlabel={$k$},
                    ytick={5,10,15,20,25,30,35,40,45,50,55,60},
                    ymin=4.9,
                    ymax=61,
                    xtick={14,20,25,30,35,40,45,50},
                    xmin=13.9,
                    xmax=50.1,
                    legend style={at={(0,1)},
                    anchor=north west},
                    grid=both
                    ]
                    \addplot+[solid,mark=10-pointed star,red] table [x, y, col sep=comma]{FiguresISIT2026/UBound-decreasing.txt};\addlegendentry{$N-\min(\A)$}\label{ubound-decreasing}
            \addplot+[solid,mark=none,red] table [x, y, col sep=comma]{FiguresISIT2026/Generalized-Avr-monomial-up.txt};\addlegendentry{Avr. Gen. poly.}\label{average-monomial}
            \addplot+[solid,mark=o,blue] table [x, y, col sep=comma]{FiguresISIT2026/f11-monomial-up.txt};\addlegendentry{$\bp=[1,1]$}\label{p11-monomial}
              \addplot+[solid,mark=star,blue] table [x, y, col sep=comma]{FiguresISIT2026/f1011-monomial-up.txt};\addlegendentry{$\bp=[1,0,1,1]$}\label{p1011-monomial}
                 \addplot+[solid,mark=triangle,blue] table [x, y, col sep=comma]{FiguresISIT2026/f1011011-monomial-up.txt};\addlegendentry{$\bp=[1,0,1,1,0,1,1]$}\label{p1011011-monomial}
                  \addplot+[solid,mark=10-pointed star,blue] table [x, y, col sep=comma]{FiguresISIT2026/f101010101-monomial-up.txt};\addlegendentry{$\bp=[1,0,1,0,1,0,1,0,1]$}\label{p101010101-monomial}
                  \addplot+[dashed,mark=10-pointed star,red] table [x, y, col sep=comma]{FiguresISIT2026/LBound-decreasing.txt};\addlegendentry{$N-1-\max(\A_c)$}\label{lbound-decreasing}
                    \addplot+[dashed,mark=none,red] table [x, y, col sep=comma]{FiguresISIT2026/Generalized-Avr-monomial.txt};\addlegendentry{Avr. Gen. poly.}\label{average-monomial}
            \addplot+[dashed,mark=o,blue] table [x, y, col sep=comma]{FiguresISIT2026/f11-monomial.txt};\addlegendentry{$\bp=[1,1]$}\label{p11-monomial}
              \addplot+[dashed,mark=star,blue] table [x, y, col sep=comma]{FiguresISIT2026/f1011-monomial.txt};\addlegendentry{$\bp=[1,0,1,1]$}\label{p1011-monomial}
                 \addplot+[dashed,mark=triangle,blue] table [x, y, col sep=comma]{FiguresISIT2026/f1011011-monomial.txt};\addlegendentry{$\bp=[1,0,1,1,0,1,1]$}\label{p1011011-monomial}
                  \addplot+[dashed,mark=10-pointed star,blue] table [x, y, col sep=comma]{FiguresISIT2026/f101010101-monomial.txt};\addlegendentry{$\bp=[1,0,1,0,1,0,1,0,1]$}\label{p101010101-monomial}
                   \addplot+[mark=none,green] table [x, y, col sep=comma]{FiguresISIT2026/diagonal.txt};\addlegendentry{$k$}\label{k}
                  \end{axis}
        \end{tikzpicture}
        }
        \caption{Dimension of the smallest monomial code that contains a PAC code (solid lines, above the green line) and the dimension of the largest monomial sub-code of the PAC code (dashed lines, below the green line), for $n=64$ and $14\leq k\leq 50.$}
    \label{fig:PAC-monomial}
     \end{figure}
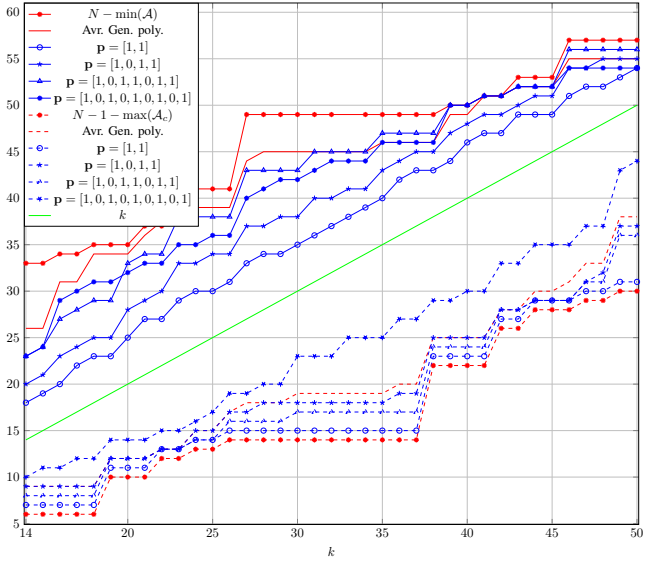

For the second set of simulations we have computes the number of monomials contained in the upper polynomial polar code, results reported in Figure \ref{fig:PAC-monomial} - dashed lines. We have observed that large convolutional polynomials with regular structure tend to maximize the dimension of the monomial sub-code of a PAC code. Notice that the largest values for the monomial sub-code dimension is for $\bp=[1,0,1,0\dots,1,0,1,0,1].$ This motivates to search for particular convolutional polynomials that might break the plain polar structure as much as possible, e.g., $\bp=[1,1]$ or $\bp=[1,0,1,1,0,1,1].$ On the same figure, represented by solid lines, we also illustrate the dimension of the smallest monomial code that contains the PAC code. Formally we compute $\dim(\C(\I))$ where $\I=\bigcup_{\ev(f)\in\C_{\bP\bGN(\A)}}\terms(f).$ Upper polynomial polar codes usually tend to approach the maximum limit reported in Lemma \ref{lem:max-monom}. Also, the higher the Hamming weight of $\bp$ the closer we get to this limit, which is expected since we are involving more monomials in the polynomial basis of the polynomial polar code.   

\section{Duality}
\subsection{Properties of duals of monomial codes}

From \cite{bardet}, we use the multiplicative complement of a monomial $g\in \Mon$ defined by $\widecheck{g}\triangleq \frac{x_0\dots x_{m-1}}{g}.$ Remark that it is well-defined since all monomials divide $x_0\dots x_{m-1}.$ Note that while $g=\mathbf{x}^{\bin(2^m-1-i)}$ for a given $i$, its complement is $(\widecheck{g})=\mathbf{x}^{\bin(i)}.$ We can extend the definition to a monomial set $\widecheck{\I}=\{\widecheck{g}\mid g\in \I\}$ and to polynomials as well, $\widecheck{Q}=\sum\limits_{f\in \terms(Q)}\widecheck{f}$ for any $Q\in\Rm$. We have
\[
\begin{array}{cccc}
    g&\rightarrow \widecheck{g}=\dfrac{x_0\dots x_{m-1}}{g}& \quad \I& \rightarrow \widecheck{\I}=\{\widecheck{g}\mid g \in \I\}\\
    i&\rightarrow \widecheck{i}=2^m-1-i& \quad \A& \rightarrow \widecheck{\A}=\{\widecheck{i}\mid i \in \A\}
\end{array}
\]  
We know from \cite{bardet} that the dual of a decreasing monomial code is still a decreasing monomial code: 
$\C(\I)^{\bot}=\C(\Mon\setminus \widecheck{\I})$. We can rewrite the equality $\C(\I)^{\bot}=\C(\widecheck{\F}).$ Indeed, $\Mon\setminus\widecheck{I}=\widecheck{\Mon}\setminus\widecheck{\I}=\widecheck{\left(\Mon\setminus\I\right)}=\widecheck{\F}.$ Or, equivalently 
\begin{equation}\label{eq:dual-monomial-dragoi}
    \C_{\bGN}(\A)^{\bot}=\C_{\bGN}(\widecheck{\A^c}).
\end{equation}

However, the dual of a monomial code (where $\I$ is simply a monomial set) is in general a polynomial code \cite{dragoi17thesis}. \[\C(\I)^{\bot}=\C(\psi(\widecheck{\F})),\]

where $\psi(\I)=\{\psi(f)\mid f\in \I\}$ for any monomial set $\I$, and $\psi(x_{i_1}\dots x_{i_s})=(1+x_{i_1})\cdot \dots \cdot (1+x_{i_s}).$ The action of $\psi$ can also be described in terms of permutations. Indeed, if we denote by $\pi$ (or in matrix terms $\bQ_{\pi}$) the permutation that swaps $i$ with $2^m-1-i$ for all indices $i\in[0,2^m-1]$, then, we have \cite{dragoi17thesis}
\begin{equation}\label{eq:psi-permutation-dragoi}
    \ev(\psi(g))^{\pi}=\ev(g)
\end{equation}

The following properties hold.

\begin{proposition}\label{pr:dual-monomial-A-form}
Let $\A\subset [0,N-1]$ and $\C_{\bGN}(\A)$ be a monomial code. We have $\bQ_{\pi}=\bQ_{\pi}^{-1}=\bQ_{\pi}^{t}$ and $\C_{\bGN}(\A)^{\bot}=\C_{\bGN\bQ_{\pi}}(\widecheck{\A^c}).$
\end{proposition}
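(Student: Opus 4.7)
The plan is to treat the two claims of the proposition separately. For the matrix identity $\bQ_\pi = \bQ_\pi^{-1} = \bQ_\pi^t$, I would just observe that $\pi : i \mapsto 2^m-1-i$ is an involution, i.e.\ $\pi \circ \pi = \mathrm{id}$, hence $\bQ_\pi^2 = \bId$ and therefore $\bQ_\pi^{-1} = \bQ_\pi$. Combining this with the generic identity $\bQ_\pi^{-1} = \bQ_\pi^t$, valid for any permutation matrix, closes the first part in one line.

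For the description of the dual, the plan is to translate the polynomial-level formula $\C(\I)^\bot = \C(\psi(\widecheck{\F}))$ into the matrix language of the statement. Using $\pi^{-1} = \pi$ from the first part, the identity $\ev(\psi(g))^\pi = \ev(g)$ recalled in \eqref{eq:psi-permutation-dragoi} can be rewritten as $\ev(\psi(g)) = \ev(g)\,\bQ_\pi$. I would then unwind what the generating rows of $\C(\psi(\widecheck{\F}))$ actually are: if $f \in \F$ has index $i_f \in \A^c$, then the complementary monomial $\widecheck{f}$ has index $\widecheck{i_f} = 2^m-1-i_f \in \widecheck{\A^c}$ via \eqref{eq:mapping}, so $\ev(\widecheck{f}) = \bGN[\widecheck{i_f}]$. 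Hence $\ev(\psi(\widecheck{f})) = \bGN[\widecheck{i_f}]\bQ_\pi = (\bGN\bQ_\pi)[\widecheck{i_f}]$, which means the rows generating $\C(\psi(\widecheck{\F}))$ are exactly the rows of $\bGN\bQ_\pi$ indexed by $\widecheck{\A^c}$, i.e.\ the code $\C_{\bGN\bQ_\pi}(\widecheck{\A^c})$.

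The main obstacle is bookkeeping: keeping consistent the correspondence between the polynomial index sets ($\I,\F,\widecheck{\I},\widecheck{\F}$) and the integer index sets ($\A,\A^c,\widecheck{\A},\widecheck{\A^c}$) under the involution $i \leftrightarrow 2^m-1-i$, and making sure the action of $\pi$ is interpreted as right-multiplication by $\bQ_\pi$ on row vectors rather than left-multiplication (both are equal here thanks to $\bQ_\pi = \bQ_\pi^t$, which conveniently neutralizes a potential sign-of-convention pitfall). Once these conventions are fixed, the whole argument reduces to a direct substitution of the $\psi$-identity into the polynomial-form duality, with no additional computation required.
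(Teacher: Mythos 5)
Your proof is correct and takes essentially the same route as the paper: both parts rest on the involution property of $\pi$ (giving $\bQ_{\pi}=\bQ_{\pi}^{-1}=\bQ_{\pi}^{t}$) and on combining the general monomial-code duality $\C(\I)^{\bot}=\C(\psi(\widecheck{\F}))$ with the identity $\ev(\psi(g))^{\pi}=\ev(g)$ to recognize the generators as the rows of $\bGN\bQ_{\pi}$ indexed by $\widecheck{\A^c}$. You merely spell out the index bookkeeping ($f\leftrightarrow i_f$, $\widecheck{f}\leftrightarrow 2^m-1-i_f$) that the paper compresses into a one-line chain of equalities.
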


\begin{proof}
    The first property is straightforward from the definition of $\bQ_{\pi}.$ For the second we have
    \begin{align*}
        \C_{\bGN}(\A)^{\bot}&\overset{\eqref{eq:dual-monomial-dragoi}}{=}\C(\I)^{\bot}\overset{\eqref{eq:psi-permutation-dragoi}}{=}\C(\psi(\widecheck{\F}))=\C((\widecheck{\F}))^{\pi}=\C_{\bGN\bQ_{\pi}}(\widecheck{\A^c}).
    \end{align*}
     We have used the mapping $\A^c\leftrightarrow \F;\A\leftrightarrow\I.$ 
\end{proof}

Algebraic properties of $\bG_N$ and $\bQ_{\pi}$ will be at hand. 

\begin{lemma}\label{lem:prop-perm-psi-G}
   Let $\bQ_{\pi}$ be the matrix permutation of $\psi.$ Then 
    \begin{itemize}
        \item $\bGN=\bGN^{-1}$
        \item $\bQ_{\pi}\bGN=(\bG\bQ_{\pi})^{-1}$;
        \item $(\bGN\bQ_{\pi})^2=\bQ_{\pi}\bGN$
        \item $\bGN\bQ_{\pi}\bGN=\bQ_{\pi}\bGN\bQ_{\pi}.$
    \end{itemize}
\end{lemma}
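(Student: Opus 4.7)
The plan is to exploit the Kronecker-power structure of both matrices and reduce every identity to a $2\times 2$ check over $\ft$. By construction $\bGN = G_2^{\otimes m}$ with $G_2 = \left(\begin{smallmatrix} 1 & 0 \\ 1 & 1 \end{smallmatrix}\right)$, and I would first verify that $\bQ_\pi = Q_2^{\otimes m}$ with $Q_2 = \left(\begin{smallmatrix} 0 & 1 \\ 1 & 0 \end{smallmatrix}\right)$. This factorization holds because $\pi\colon i\mapsto 2^m-1-i$ acts on $\bin(i)\in\ft^m$ as coordinate-wise bit complement, and each individual bit flip is exactly the action of $Q_2$. The Kronecker mixed-product law $(A\otimes B)(C\otimes D)=(AC)\otimes(BD)$ then turns every word in $\bGN$ and $\bQ_\pi$ into the $m$-th Kronecker power of the corresponding word in $G_2$ and $Q_2$.

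Under this reduction, item (i) follows from $G_2^2 = \bId_2$, an immediate computation over $\ft$, which yields $\bGN^2 = \bId_N$. Item (iv) reduces to a single $2\times 2$ identity that I would verify by direct multiplication: $G_2 Q_2 G_2 = Q_2 G_2 Q_2 = \left(\begin{smallmatrix} 1 & 1 \\ 0 & 1 \end{smallmatrix}\right)$. Raising both sides to the $m$-th Kronecker power gives $\bGN \bQ_\pi \bGN = \bQ_\pi \bGN \bQ_\pi$.

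Items (ii) and (iii) are then purely formal consequences, using also $\bQ_\pi^2 = \bId_N$, which is already recorded in the preceding proposition (or follows from $Q_2^2 = \bId_2$ by the same Kronecker argument). For (ii): $(\bQ_\pi \bGN)(\bGN \bQ_\pi) = \bQ_\pi \bGN^2 \bQ_\pi = \bQ_\pi^2 = \bId_N$, invoking (i). For (iii): $(\bGN \bQ_\pi)^2 = \bGN(\bQ_\pi \bGN \bQ_\pi) = \bGN(\bGN \bQ_\pi \bGN) = \bGN^2 \bQ_\pi \bGN = \bQ_\pi \bGN$, invoking (iv) and then (i).

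The only genuinely non-routine step is the $2\times 2$ identity $G_2 Q_2 G_2 = Q_2 G_2 Q_2$, a four-entry check over $\ft$. The main point requiring care is verifying that the ordering of the binary coordinates indexing the rows and columns of $\bGN$ is compatible with the coordinate-wise factorization of $\bQ_\pi$, so that the two Kronecker factorizations align; once this is pinned down, the rest is bookkeeping.
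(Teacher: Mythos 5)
Your proof is correct, but it takes a different route from the paper. You factor both matrices as Kronecker powers, $\bGN=\bG_2^{\otimes m}$ and $\bQ_{\pi}=\bQ_2^{\otimes m}$ with $\bQ_2=\left(\begin{smallmatrix}0&1\\1&0\end{smallmatrix}\right)$ (valid because $i\mapsto 2^m-1-i$ is coordinate-wise bit complement), and then use the mixed-product rule to reduce the first and fourth identities to the $2\times 2$ checks $\bG_2^2=\bId$ and $\bG_2\bQ_2\bG_2=\bQ_2\bG_2\bQ_2=\left(\begin{smallmatrix}1&1\\0&1\end{smallmatrix}\right)$, after which the second and third identities follow formally from these together with $\bQ_{\pi}^2=\bId$. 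The paper instead works with one level of the recursive block structure over characteristic $2$: it proves $\bGN^2=\bId$ from $\left(\begin{smallmatrix}\bA&\bm{0}\\ \bA&\bA\end{smallmatrix}\right)^2=\left(\begin{smallmatrix}\bA^2&\bm{0}\\ \bm{0}&\bA^2\end{smallmatrix}\right)$, proves the third identity by a block computation of $(\bGN\bQ_{\pi})^3$ (where $\bGN\bQ_{\pi}$ has the form $\left(\begin{smallmatrix}\bm{0}&\bA\\ \bA&\bA\end{smallmatrix}\right)$ with $\bA=\bG_{N/2}\bQ_{\pi,N/2}$), and only then deduces the fourth identity from the third; you go in the opposite direction, proving the fourth directly and deducing the third. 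Your version buys something concrete: the paper's block argument for the third identity actually needs $\bA^3=\bId$ for the half-size matrix, i.e.\ an implicit induction on $m$ (the condition stated there, $\bA^2=\bId$, does not hold for $\bA=\bG_{N/2}\bQ_{\pi,N/2}$), whereas your Kronecker reduction settles everything in one step at size $2$. The price is the single bookkeeping point you already flag — that the binary indexing of rows and columns of $\bGN$ is the same one under which $\bQ_{\pi}$ factors coordinate-wise — which is harmless since bit complement acts identically on each coordinate regardless of bit-order convention.
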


We can deduce from the previous lemma that $\bGN\bQ_{\pi}\bGN=\bQ_{\pi}\bGN\bQ_{\pi}.$
\begin{proof}
    The first identity comes from the fact that \[\begin{pmatrix}
        \bA & \bm{0}\\
        \bA & \bA
    \end{pmatrix}\cdot \begin{pmatrix}
        \bA & \bm{0}\\
        \bA & \bA
    \end{pmatrix}=\begin{pmatrix}
        \bA^2 & \bm{0}\\
        \bm{0} & \bA^2
    \end{pmatrix}\] for any matrix $\bA$ defined over a field of characteristic 2. Apply this property to $\bGN$ and use the fact that $\bG_2^2=\bId$ to obtain the result.  
    The second identity follows immediately from the definition of $\bQ_{\pi}$ and $\bGN^{-1}=\bGN.$ 
    
    Regarding the third identity we have that \[\begin{pmatrix}
        \bm{0}& \bA \\
        \bA & \bA
    \end{pmatrix}\cdot \begin{pmatrix}
        \bm{0} & \bA \\
        \bA & \bA
    \end{pmatrix}\cdot \begin{pmatrix}
        \bm{0} & \bA \\
        \bA & \bA
    \end{pmatrix}=\begin{pmatrix}
        \bA^2 &\bm{0} \\
        \bm{0} &\bA^2 
    \end{pmatrix}\] for any matrix $\bA$ defined over a field of characteristic 2. Since $\bGN\bQ_{\pi}$ is of the form  $\begin{pmatrix}
        \bm{0}& \bA \\
        \bA & \bA
    \end{pmatrix}$ with $\bA^2=\bId$ we obtain the wanted result.

    The last identity follows directly from the third identity and the first one.
\end{proof}

\begin{remark}
Notice that row $i$ of $\bQ_{\pi}\bGN\bQ_{\pi}$ is equal to $\ev(\psi(\mathbf{x}^{\bin(i)})).$ Also, each column of $\bGN$ is the evaluation of $\psi(\mathbf{x}^{\bin(i)}).$
\end{remark}

A direct consequence of Lemma \ref{lem:prop-perm-psi-G} and Proposition \ref{pr:dual-monomial-A-form} is the following.

\begin{corollary}
    Let $\I\subset \Mon$ be a monomial set. We have
    \begin{equation}
        \C(\I)^{\bot}=\C_{\bGN\bQ_{\pi}\bGN}(\A^c).
    \end{equation}
\end{corollary}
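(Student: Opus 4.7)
The plan is to chain Proposition~\ref{pr:dual-monomial-A-form} with the algebraic identities provided by Lemma~\ref{lem:prop-perm-psi-G}. The starting point is the equality $\C(\I)^{\bot}=\C_{\bGN\bQ_{\pi}}(\widecheck{\A^c})$ given by Proposition~\ref{pr:dual-monomial-A-form}. The only thing left to do is rewrite the right-hand side so that the indexing set becomes $\A^c$ and the generator matrix becomes $\bGN\bQ_{\pi}\bGN$.

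The second step is to swap the ``hat'' on the index set for a left multiplication by $\bQ_{\pi}$ on the matrix. Since $\bQ_{\pi}$ is the involution that swaps $i$ and $2^m-1-i$, pre-multiplying any matrix $\bM$ by $\bQ_{\pi}$ permutes its rows by the same involution: row $i$ of $\bQ_{\pi}\bM$ equals row $\widecheck{i}$ of $\bM$. Applying this to $\bM=\bGN\bQ_{\pi}$, the set of rows of $\bQ_{\pi}\bGN\bQ_{\pi}$ indexed by $\A^c$ is exactly the set of rows of $\bGN\bQ_{\pi}$ indexed by $\widecheck{\A^c}$, so the two matrices generate the same linear code, giving
\[
\C(\I)^{\bot}\;=\;\C_{\bQ_{\pi}\bGN\bQ_{\pi}}(\A^c).
\]

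The third step is to invoke the matrix identity $\bQ_{\pi}\bGN\bQ_{\pi}=\bGN\bQ_{\pi}\bGN$, which is the consequence of Lemma~\ref{lem:prop-perm-psi-G} stated immediately after its proof (it follows either from combining the third and fourth items, or from right-multiplying the third identity $(\bGN\bQ_{\pi})^2=\bQ_{\pi}\bGN$ by $\bGN$ and using $\bGN^2=\bId$). Substituting this identity into the previous display yields the claim $\C(\I)^{\bot}=\C_{\bGN\bQ_{\pi}\bGN}(\A^c)$.

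The only delicate point is the bookkeeping in the second step: one must confirm that left-multiplication by $\bQ_{\pi}$ really permutes rows via $i\mapsto\widecheck{i}$ rather than via $\pi^{-1}$. Since $\bQ_{\pi}=\bQ_{\pi}^{-1}$ (Proposition~\ref{pr:dual-monomial-A-form}), the ambiguity is harmless, and one can sanity-check the result via the remark following Lemma~\ref{lem:prop-perm-psi-G}: row $i$ of $\bQ_{\pi}\bGN\bQ_{\pi}$ equals $\ev(\psi(\mathbf{x}^{\bin(i)}))$, so restricting $i$ to $\A^c$ produces exactly the generating set $\ev(\psi(\widecheck{\F}))$ of $\C(\I)^{\bot}$ coming from \cite{dragoi17thesis}.
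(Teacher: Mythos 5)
Your proof is correct and follows essentially the same route as the paper's: start from Proposition~\ref{pr:dual-monomial-A-form}, trade the check on the index set $\widecheck{\A^c}$ for a left multiplication of the generator matrix by $\bQ_{\pi}$, and then apply the identity $\bQ_{\pi}\bGN\bQ_{\pi}=\bGN\bQ_{\pi}\bGN$ from Lemma~\ref{lem:prop-perm-psi-G}. The only quibble is your parenthetical alternative derivation of that identity (right-multiplying $(\bGN\bQ_{\pi})^2=\bQ_{\pi}\bGN$ by $\bGN$ does not land directly on it; multiplying by $\bQ_{\pi}$ does), but this is immaterial since the identity is stated as the last item of the lemma.
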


\begin{proof}
    First notice that $\C_{\bGN}(\widecheck{\A})=\C_{\bQ_{\pi}\bGN}(\A)$, due to the fact $i$ becomes $2^m-1-i$ by $\bQ_{\pi}$ and $i$ becomes $2^m-i-1$ by $\widecheck{i}.$ Hence, taking information set $\widecheck{\A}$ is equivalent to taking information set $\A$ on the permuted matrix $\bQ_{\pi}\bGN.$

    Therefore, we have 
    \begin{equation*}
        \C_{\bGN\bQ_{\pi}}(\widecheck{\A^c})=\C_{\bQ_{\pi}\bGN\bQ_{\pi}}(\A^c).
    \end{equation*}

    Using the properties of $\bGN$ and $\bQ_{\pi}$ from \ref{lem:prop-perm-psi-G} we have 
    \begin{equation*}
        \C_{\bQ_{\pi}\bGN\bQ_{\pi}}(\A^c)=\C_{\bGN\bQ_{\pi}\bGN}(\A^c),
    \end{equation*}
    which ends our proof.
\end{proof}

Our results reads as follows. Given a monomial code, specified by an information set $\A$, compute the frozen set $\A^c$ which will be the information of the dual code. The dual code is a lower polynomial code since $\bQ_{\pi}\bGN$ is a monomial basis and $\bGN$ is a lower triangular matrix. 


\subsection{Dual of polynomial polar codes}
When dealing with polynomial polar codes, one can not simply select the complement of monomials in the frozen set $\F$ since these were previously considered in combination with monomials in $\I$, when the code was generated. 
\begin{example}
    Take the $N=64,k=48$ PAC code with $\bp=[1,0,1,1].$ Observe that the frozen monomial $g=x_0x_1x_4x_5$ becomes $P_g=x_0x_1x_4x_5+x_0x_4x_5+x_4x_5.$ Hence, when taking the complement we obtain $\widecheck{g}=x_2x_3$ and $\widecheck{P_g}=x_2x_3+x_1x_2x_3+x_0x_1x_2x_3.$ Even if we apply $\psi(\widecheck{P_g})$ there is no guaranty that $\ev(\psi(\widecheck{P_g}))$ is orthogonal of all vectors in the PAC code. For example, $\psi(\widecheck{P_g})$ is not orthogonal on $x_4x_5\in\I.$ 
    The main reason for that is because when we multiply these two polynomials we obtain a new polynomial which has among its terms the monomial $x_0\cdots x_{m-1}.$ Indeed, we have 
\begin{align*}
    \psi(\widecheck{P_g})\cdot x_4x_5&=\psi(x_2x_3+x_1x_2x_3+x_0x_1x_2x_3)\cdot x_4x_5\\
    &=(x_2+1)(x_3+1)\cdot x_4x_5\\
    &+(x_1+1)(x_2+1)(x_3+1)\cdot x_4x_5\\
    &+(x_0+1)(x_1+1)(x_2+1)(x_3+1)\cdot x_4x_5.
\end{align*}  

After expanding the terms we do obtain that $x_0x_1x_2x_3x_4x_5\in\terms( \psi(\widecheck{P_g})\cdot x_4x_5)$, which implies that $\ev (\psi(\widecheck{P_g})),\ev(x_4x_5)$ are not orthogonal.
\end{example}

A crucial property on orthogonality can be expressed in terms of product of polynomials. 
\begin{lemma}\label{lem:duality-cond-on-max-mon}
    Let $P_1,P_2\in\Rm.$ Then $\ev(P_1)\cdot \ev(P_2)=0$ iff $x_0\dots x_{m-1}\not \in \terms(P_1\cdot P_2).$
\end{lemma}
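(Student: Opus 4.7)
The plan is to compute the scalar product pointwise and reduce it to reading off a single coefficient in the product polynomial. Recall that by definition $\mathrm{ev}(P_j) = (P_j(\mathbf{z}))_{\mathbf{z} \in \mathbb{F}_2^m}$ and the componentwise product of vectors is induced by the product of polynomials in $\mathbf{R}_m$ (the relations $x_i^2=x_i$ are exactly the relations that make evaluation multiplicative on $\mathbb{F}_2^m$). Hence
\[
\mathrm{ev}(P_1)\cdot \mathrm{ev}(P_2) \;=\; \sum_{\mathbf{z}\in\mathbb{F}_2^m} P_1(\mathbf{z})P_2(\mathbf{z}) \;=\; \sum_{\mathbf{z}\in\mathbb{F}_2^m} Q(\mathbf{z}),
\]
where $Q = P_1\cdot P_2 \in \mathbf{R}_m$. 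So the statement reduces to showing that for any $Q\in\mathbf{R}_m$, the sum $\sum_{\mathbf{z}} Q(\mathbf{z})$ over $\mathbb{F}_2$ equals the coefficient of $x_0\cdots x_{m-1}$ in $Q$.

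The key step is a direct monomial-by-monomial computation. Write $Q = \sum_{\mathbf{a}\in\mathbb{F}_2^m} c_{\mathbf{a}}\, \mathbf{x}^{\mathbf{a}}$ in the monomial basis of $\mathbf{R}_m$. By linearity,
\[
\sum_{\mathbf{z}\in\mathbb{F}_2^m} Q(\mathbf{z}) \;=\; \sum_{\mathbf{a}} c_{\mathbf{a}} \sum_{\mathbf{z}\in\mathbb{F}_2^m} \mathbf{z}^{\mathbf{a}}.
\]
The inner sum factorizes as $\prod_{i=0}^{m-1}\bigl(\sum_{z_i\in\{0,1\}} z_i^{a_i}\bigr)$. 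Each index $i$ with $a_i=0$ contributes a factor $2\equiv 0\pmod 2$, while each index with $a_i=1$ contributes a factor $1$. Thus the inner sum is $0$ unless $\mathbf{a}=(1,1,\dots,1)$, in which case it equals $1$. Therefore $\sum_{\mathbf{z}} Q(\mathbf{z}) = c_{(1,\dots,1)}$, which is nothing but the coefficient of $x_0\cdots x_{m-1}$ in $Q$.

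Combining the two steps, $\mathrm{ev}(P_1)\cdot\mathrm{ev}(P_2)$ equals the coefficient of $x_0\cdots x_{m-1}$ in $P_1\cdot P_2$, and this coefficient is $0$ in $\mathbb{F}_2$ iff $x_0\cdots x_{m-1}\notin \mathrm{terms}(P_1\cdot P_2)$, giving both directions of the equivalence. There is no real obstacle here; the only subtlety worth flagging is that the reduction modulo $x_i^2-x_i$ must be carried out in $\mathbf{R}_m$ before reading off the leading coefficient, since only reduced polynomials have a well-defined ``$x_0\cdots x_{m-1}$ coefficient''. This is automatic because $Q$ lies in $\mathbf{R}_m$ by hypothesis.
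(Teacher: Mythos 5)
Your proof is correct. The first step coincides with the paper's: both reduce the scalar product to $\sum_{\mathbf{z}\in\ft^m}(P_1\cdot P_2)(\mathbf{z})$, using that evaluation turns the product in $\Rm$ into the componentwise product of vectors. Where you diverge is in the second step. The paper argues via Reed--Muller codes: the sum vanishes iff $\ev(P_1\cdot P_2)$ has even weight, the even-weight vectors of $\ft^{2^m}$ are exactly $\R(m-1,m)$, and $x_0\cdots x_{m-1}$ is the unique monomial of $\R(m,m)$ outside $\R(m-1,m)$, so the parity of the weight is governed by whether that monomial appears. You instead compute the functional $Q\mapsto\sum_{\mathbf{z}}Q(\mathbf{z})$ directly on the monomial basis, factorizing $\sum_{\mathbf{z}}\mathbf{z}^{\mathbf{a}}=\prod_i\bigl(\sum_{z_i\in\{0,1\}}z_i^{a_i}\bigr)$ and noting that any coordinate with $a_i=0$ contributes $2\equiv 0$, so the sum equals the coefficient of $x_0\cdots x_{m-1}$. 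Your route is more elementary and self-contained (it does not presuppose the characterization of $\R(m-1,m)$ as the even-weight code, and in fact reproves it along the way), whereas the paper's is shorter given standard Reed--Muller background; your remark about reducing modulo $x_i^2-x_i$ before reading off the top coefficient is a worthwhile precision that the paper leaves implicit.
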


\begin{proof}
    By definition of the scalar product between two vectors $\ev(P_1)$ and $\ev(P_2)$ we have 
    \begin{align*}
        \ev(P_1)\cdot \ev(P_2)&=\sum_{i=0}^N \left(\ev(P_1)\odot \ev(P_2)\right)[i]=\sum_{i=0}^N \ev(P_1\cdot P_2)[i].
    \end{align*}

 Hence, as long as $\ev(P_1\cdot P_2)$ is of even weight the scalar product equals zero, due to the fact that summation is over $\ft.$ Since the code $\R(m-1,m)$ is the code of all possible vectors of even Hamming weight, and $x_0\cdots x_{m-1}$ is the only monomial in $\R(m,m)\setminus\R(m-1,m)$ we deduce the wanted result.  
\end{proof}

We can explain how to compute the dual of upper polynomial polar code and reveal its structure. Our idea builds on the use of the multiplicative complement and $\psi.$ We will thus search monomials $f$ that are indexed by the frozen set $\A^c$ and determine conditions for orthogonality with the polynomials in the basis of $\C_{\bP\bGN}(\A).$

The first property that we exhibit states that when searching for duality conditions for upper polynomial polar codes we only need to look backwards in terms of indices. 

\begin{lemma}\label{lem:forward-not}
    Let $\C_{\bP\bGN}(\A)$ be an upper polynomial polar code. Let $\ev(f)=\bGN[i]$ with $i\in[0,2^m-1].$ We have \[\forall j>i \quad \ev(\widecheck{f})\cdot \bP\bGN[j]=0.\]
    
\end{lemma}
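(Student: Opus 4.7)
The plan is to expand the $j$-th row of $\bP\bGN$ using that $\bP$ is upper triangular with unit diagonal, and then reduce the claim to a componentwise statement on binary indices via Lemma \ref{lem:duality-cond-on-max-mon}.

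First, since $\bP$ is upper triangular with $P_{j,j}=1$, I can write
\[
(\bP\bGN)[j]=\sum_{k\geq j}P_{j,k}\,\bGN[k]=\sum_{k\geq j}P_{j,k}\,\ev(g_k),
\]
where, following \eqref{eq:mapping}, $g_k=\mathbf{x}^{\bin(2^m-1-k)}$. By bilinearity of the scalar product, it suffices to show that $\ev(\widecheck{f})\cdot\ev(g_k)=0$ for every $k\geq j$; since $j>i$, each such $k$ satisfies $k>i$.

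Next I compute $\widecheck{f}\cdot g_k$ inside $\Rm$. Because $\widecheck{f}=\mathbf{x}^{\bin(i)}$ and $x_\ell^2=x_\ell$ in $\Rm$, the product collapses to the single monomial
\[
\widecheck{f}\cdot g_k=\mathbf{x}^{\bin(i)\,\vee\,\bin(2^m-1-k)},
\]
where $\vee$ denotes componentwise OR on $\ft^m$. Lemma \ref{lem:duality-cond-on-max-mon} then tells me that $\ev(\widecheck{f})\cdot\ev(g_k)=0$ iff this product monomial is not $x_0\cdots x_{m-1}$, i.e., iff $\bin(i)\vee\bin(2^m-1-k)\neq(1,\ldots,1)$. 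Expanding coordinatewise, the all-ones equality holds iff for every $\ell$ one has $i_\ell=1$ or $k_\ell=0$, equivalently iff $\bin(k)\leq\bin(i)$ componentwise.

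Finally I translate this componentwise condition back to the integer order: $\bin(k)\leq\bin(i)$ in $\ft^m$ forces $k=\sum_\ell k_\ell 2^\ell\leq\sum_\ell i_\ell 2^\ell=i$. But $k\geq j>i$ by hypothesis, so $\bin(k)\not\leq\bin(i)$, hence $\widecheck{f}\cdot g_k\neq x_0\cdots x_{m-1}$, and Lemma \ref{lem:duality-cond-on-max-mon} yields the required $\ev(\widecheck{f})\cdot\ev(g_k)=0$. Summing over $k\geq j$ gives $\ev(\widecheck{f})\cdot(\bP\bGN)[j]=0$. The only genuinely delicate point is the alignment of two different orders on $[0,2^m-1]$: the integer order that governs the upper-triangular structure of $\bP$, and the componentwise order on $\ft^m$ that governs when a product of monomials in $\Rm$ equals the top monomial $x_0\cdots x_{m-1}$. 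Upper-triangularity of $\bP$ is exactly what makes these two orders pull in the same direction, which is also the reason the lemma is one-sided and fails without that assumption.
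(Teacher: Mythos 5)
Your proof is correct and follows essentially the same route as the paper: expand row $j$ of $\bP\bGN$ into rows $\bGN[k]$ with $k\geq j$ (upper-triangularity), apply Lemma~\ref{lem:duality-cond-on-max-mon} with the top monomial $x_0\cdots x_{m-1}$, and derive the contradiction with $j>i$; your componentwise condition $\bin(k)\leq\bin(i)$ is just the bitwise restatement of the paper's divisibility condition $f\mid g_k$. The only cosmetic difference is that you argue term by term rather than over the whole polynomial, which if anything is slightly cleaner.
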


\begin{proof}
    Let $P=\bP\bGN[j]$. By definition $P=\bGN[j]+\sum\limits_{l\in J\setminus\{j\}}\bGN[l]$ where $J$ is the set corresponding to row $j$ of $\bP$, i.e., $J=\supp(\bP[j]).$ We thus have 
    \[\ev(P)\cdot \ev(\widecheck{f})=\sum\limits_{k=0}^N\ev((\bGN[j]+\sum\limits_{l\in J\setminus\{j\}}\bGN[l])\cdot \psi(\widecheck{f})).\]
    In order to have a non zero scalar product, by Lemma \ref{lem:duality-cond-on-max-mon}, the monomial $x_0\cdots x_{m-1}$ should belong to the terms of $(\bGN[j]+\sum\limits_{l\in J}\bGN[l])\cdot \psi(\widecheck{f}).$ This implies that there are some indices $l\in J$ for which $\widecheck{f}\cdot \bGN[l]=x_0\cdots x_{m-1}$ which is equivalent to  $\widecheck{\bGN[i]}\cdot \bGN[l]=x_0\cdots x_{m-1}.$ For this equation to be satisfied we must either have $\bGN[i]=\bGN[l]$ or $\bGN[i]|\bGN[l]$ which contradicts $j>i.$ 
\end{proof}

Hence, the only way of combining polynomials from a set $\A$ with multiplicative complements of monomials from frozen set $\A^c$ such that they do not define orthogonal vectors is by combining monomials backwards. That is why we will define a matrix which acts exactly in the opposite direction of $\bP.$ To do so, the following property will be at use.
\begin{lemma}\label{lema:backward-parity-condition}   
Let $\C_{\bP\bGN}(\A)$ be a upper polynomial polar code. Let $\ev(f)=\bGN[i]$ for an integer $i\in[0,2^m-1].$ Then for any $j<i$ we have \[g\in\terms(\bP\bGN[j]) \text{ and }x_0\cdots x_{m-1}= g\cdot \widecheck{f} \Rightarrow \widecheck{g}|\widecheck{f}.\]
\end{lemma}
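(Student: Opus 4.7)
The plan is to unpack both the hypothesis and the conclusion into conditions on the index sets $\ind(g)$ and $\ind(\widecheck{f})$, and then observe that these conditions are tautologically equivalent. The upper-triangular structure of $\bP$ and the assumption $j<i$ do not really enter the implication itself; they only serve to guarantee that the index $l$ corresponding to $g$ (i.e.\ $\bGN[l]=\ev(g)$) satisfies $l\ge j$, which combined with the conclusion $\widecheck g\mid \widecheck f$ forces $l\le i$, so that in the end $j\le l\le i$.

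First I would rewrite the hypothesis. Since we work in $\Rm$ where $x_k^2=x_k$, the product of two monomials $g$ and $\widecheck{f}$ is the monomial whose index set is $\ind(g)\cup\ind(\widecheck{f})$. Hence the equality
\[
g\cdot\widecheck{f}=x_0\cdots x_{m-1}
\]
is equivalent to $\ind(g)\cup\ind(\widecheck{f})=[0,m-1]$.

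Next I would rewrite the conclusion. By definition of the multiplicative complement, $\widecheck{g}=\frac{x_0\cdots x_{m-1}}{g}$, so $\ind(\widecheck{g})=[0,m-1]\setminus\ind(g)$. The divisibility $\widecheck{g}\mid \widecheck{f}$ is equivalent to $\ind(\widecheck{g})\subseteq\ind(\widecheck{f})$, i.e.\ to
\[
[0,m-1]\setminus\ind(g)\;\subseteq\;\ind(\widecheck{f}),
\]
which is just another way of writing $\ind(g)\cup\ind(\widecheck{f})\supseteq[0,m-1]$.

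Combining the two rewrites yields the result immediately, since the hypothesis already gives the reverse inclusion (and the other inclusion $\ind(g)\cup\ind(\widecheck{f})\subseteq[0,m-1]$ is automatic). There is no real obstacle; the only thing to be careful about is to use the ring-theoretic identity $x_k^2=x_k$ in $\Rm$ when interpreting the product of monomials, so that $g\cdot\widecheck{f}$ collapses to a single monomial whose support is the union of the individual supports, rather than trying to argue with disjointness of $\ind(g)$ and $\ind(\widecheck{f})$.
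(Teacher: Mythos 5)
Your proof is correct and follows essentially the same route as the paper's: the divisibility $\widecheck{g}\mid\widecheck{f}$ is read off directly from the equation $g\cdot\widecheck{f}=x_0\cdots x_{m-1}$, the paper phrasing it as ``$\widecheck{f}$ is a multiple of $\frac{x_0\cdots x_{m-1}}{g}=\widecheck{g}$'' while you make the underlying support/idempotency argument ($\ind(g)\cup\ind(\widecheck{f})=[0,m-1]$ in $\Rm$) explicit. Your observation that $j<i$ and the upper-triangular structure of $\bP$ are not needed for the implication itself matches the role this hypothesis actually plays in the paper (it only matters for how the lemma is used afterwards).
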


\begin{proof}
    For $j<i$ the terms in $\bP\bGN[j]$ are monomials $g$ with $\ev(g)=\bGN[l]$ for some values $l<i.$ Thus,
    \[x_0\cdots x_{m-1}=g\cdot \widecheck{f} \Rightarrow \widecheck{f} \text{ is a multiple of } \dfrac{x_0\cdot x_{m-1}}{g}=\widecheck{g}.\]
\end{proof}

We can now define a matrix $\bM$ that counts the number of monomials/terms in each polynomial $P$, $\ev(P)=\bP\bGN[j]$, which fall under the previous condition. More exactly for each monomial $f$, $\ev(f)=\bGN[i]$ we will count how may $g\in\terms(P)$ satisfy the condition $\widecheck{g}|\widecheck{f}.$ 

\begin{equation}\label{eq:matrix-freq}
        M_{i,j}=\left|\{g \in \terms(P), \mid \widecheck{g}| \widecheck{f} \}\right| \pmod{2}
    \end{equation}
where $\ev(P)=\bP\bGN[j], \ev(f)=\bGN[i].$

Given a row index $i$, recall that our main issue comes from row indices $j<i.$ Hence, if we consider a monomial $\widecheck{f}$, where $\ev(f)=\bGN[i]$ the vector in the code $\C_{\bP\bGN}(\A)$ raising problems with respect to orthogonality are exactly those coming from indices $j<i.$ If there are an even number of monomials $g\in\terms(P)$, $P$ being the polynomial corresponding to a codeword $\bP\bGN[j]$, they cancel out each other due to $2=0$ over $\ft.$ The problem is when there are an odd number of such monomials. In this case we need to combine our initial monomial $f$ with another one that manages to cancel out the $x_0\cdots x_{m-1}$ resulting from $g\cdot \widecheck{f}.$ Let's see all of these on an example.  

\begin{example}
    Take the $(64,48)$ polar code and the PAC code resulting using convolutional polynomial $[1,0,1,1].$ To simplify notations the row $\widecheck{P}$ will be indexed as $\widecheck{P_i}.$  
    \begin{itemize}
        \item Row $0:$ $\widecheck{f}=1\rightarrow [1,0,\dots ]$, since row 0 of $\terms(\widecheck{P_0})=\{1,x_1,x_0x_1\}.$
        \item Row $1:$ $\widecheck{f}=x_0\rightarrow [1,1, 0,\dots ].$ Here we have one term in row 0 of $P_0$ which divides $x_0$, namely $1|x_0$ while in $\terms(P_1)=\{x_0,x_0x_1,x_2\}$ we have one term $x_0|x_0.$ 
        \item Row $2:$ $\widecheck{f}=x_1\rightarrow [0 ,0, 1, 0, \dots ].$ Here we have two terms in row 0 of $P_0$ which divides $x_1$, namely $1,x_1$; zero terms in $\terms(P_1)=\{x_0,x_0x_1,x_2\}$; and one term in $\terms(P_2)=\{x_1,x_0x_2,x_2\}.$ 
        \item Row $3:$ $\widecheck{f}=x_0x_1\rightarrow [1 ,0, 1, 1, 0, \dots ].$ Here we have three terms in row 0 of $P_0$ which divides $x_1$, namely $1,x_1,x_0x_1$; two terms in $P_1$, namely $x_0,x_0x_1$; one term in $P_2$, i.e., $x_1$; and one term in $P_3.$
    \end{itemize}
    Continuing up to the first 10 rows we obtain \[\begin{pmatrix}
        1 &0 &0 &0 &0 &0 &0 &0 &0 &0\\
1 &1 &0 &0 &0 &0 &0 &0 &0 &0\\
0 &0 &1 &0 &0 &0 &0 &0 &0 &0\\
1 &0 &1 &1 &0 &0 &0 &0 &0 &0\\
1 &1 &1 &0 &1 &0 &0 &0 &0 &0\\
1 &0 &0 &1 &1 &1 &0 &0 &0 &0\\
0 &1 &0 &1 &0 &0 &1 &0 &0 &0\\
1 &1 &1 &1 &1 &0 &1 &1 &0 &0\\
1 &0 &0 &0 &0 &1 &1 &0 &1 &0\\
1 &1 &0 &0 &0 &1 &0 &1 &1 &1\\
    \end{pmatrix}\]
\end{example}

Before we reveal the dual of an upper polynomial polar code we will detail some properties regarding the matrix $\bM.$
\paragraph{Properties of $\bM$}
First, notice that each row of our matrix $\bM$ is indexed by $\bx^{\bin(i)}$ with $i=[0,2^{m}-1]$, while the columns are indexed by $\widecheck{P}$ where $\ev(P)=\bP\bGN[2^m-i].$ 

Another interesting property is when $\bP=\bId$.

\begin{lemma}\label{lem:M-trivial}
    Let $\bP=\bId.$ Then we have 
    \[\bM=\bGN.\]
\end{lemma}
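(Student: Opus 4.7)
The plan is to unfold the definition of $\bM$ in the special case $\bP=\bId$ and show that the resulting condition on $(i,j)$ coincides with the well-known bitwise-subset characterization of the entries of $\bGN$. Since the product $\bId\,\bGN$ equals $\bGN$, every row $\bP\bGN[j]$ reduces to $\bGN[j]$, which is the evaluation of a single monomial. Consequently, for the polynomial $P$ with $\ev(P)=\bP\bGN[j]$ we have $\terms(P)=\{g_j\}$, where $g_j$ is uniquely determined by $\ev(g_j)=\bGN[j]$. Substituting into \eqref{eq:matrix-freq} collapses the count to a single membership test:
\begin{equation*}
M_{i,j} \;=\; \bigl|\{g_j : \widecheck{g_j}\mid \widecheck{f}\}\bigr| \bmod 2 \;=\; \begin{cases} 1 & \text{if } \widecheck{g_j}\mid \widecheck{f},\\ 0 & \text{otherwise},\end{cases}
\end{equation*}
where $\ev(f)=\bGN[i]$.

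Next I would translate the divisibility $\widecheck{g_j}\mid \widecheck{f}$ into a purely combinatorial condition on the indices $i,j$. By the mapping \eqref{eq:mapping}, $g_j=\mathbf{x}^{\bin(2^m-1-j)}$ and $f=\mathbf{x}^{\bin(2^m-1-i)}$, so by definition of the multiplicative complement one has $\widecheck{g_j}=\mathbf{x}^{\bin(j)}$ and $\widecheck{f}=\mathbf{x}^{\bin(i)}$. Divisibility of monomials is equivalent to set containment of indices (Definition~\ref{def:order}), hence
\begin{equation*}
\widecheck{g_j}\mid \widecheck{f} \;\Longleftrightarrow\; \ind(\widecheck{g_j})\subseteq \ind(\widecheck{f}) \;\Longleftrightarrow\; \supp(\bin(j))\subseteq \supp(\bin(i)),
\end{equation*}
which is exactly the bitwise-inclusion relation $j_k\le i_k$ for all $k\in[0,m-1]$.

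Finally I would match this to $\bGN[i,j]$. Using the Kronecker power structure $\bGN=\bG_2^{\otimes m}$ with $\bG_2=\begin{pmatrix}1&0\\1&1\end{pmatrix}$, a direct computation yields $\bGN[i,j]=\prod_{k=0}^{m-1}\bG_2[i_k,j_k]$, and since $\bG_2[a,b]=1$ iff $b\le a$, we obtain $\bGN[i,j]=1$ precisely when $j_k\le i_k$ for every $k$. Combining the two equivalences gives $M_{i,j}=\bGN[i,j]$ entrywise, which proves $\bM=\bGN$.

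The argument is essentially a bookkeeping exercise, and the only delicate step is keeping the two conventions straight: row $j$ of $\bGN$ corresponds to the monomial $\mathbf{x}^{\bin(2^m-1-j)}$, while the bits of $j$ itself index the \emph{complement} $\widecheck{g_j}$. I expect this double complementation (once in the definition of $g_j$ via $\bin(2^m-1-j)$ and once in passing to $\widecheck{g_j}$) to be the only place where a careful verification is needed; otherwise no estimation or counting beyond a single divisibility test is required.
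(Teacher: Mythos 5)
Your proof is correct and follows the same idea the paper relies on: with $\bP=\bId$ each row polynomial collapses to a single monomial, so $M_{i,j}$ becomes the divisibility test $\widecheck{g_j}\mid\widecheck{f}$, i.e.\ $\supp(\bin(j))\subseteq\supp(\bin(i))$, which is exactly the Kronecker-product characterization of the entries of $\bGN$. The paper only asserts this via a remark and the $m=3$ divisibility table, so your entrywise verification (including the careful handling of the $\bin(2^m-1-j)$ versus $\bin(j)$ complementation) simply fills in the details of the same argument.
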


 This property reveals another meaning of the generator matrix of plain polar codes. Its rows express the divisibility between a fixed monomial $f$ and the set of all monomials $\Mon$ taken in a particular order. 

\begin{example}
   Let $m=3$ and $\bP=\bId$ (see Table \ref{tab:contruct-M-id}). 
    \begin{table}[!h]
        \centering
        \resizebox{0.45\textwidth}{!}{
        \begin{tabular}{c||c|c|c|c| c|c|c|c}
        \toprule
       $f$ &\multicolumn{8}{c}{$g$}\\
        &$x_0x_1x_2$ & $x_1x_2$ & $x_0x_2$ & $x_2$ & $x_0x_1$ & $x_1$ & $x_0$ & $1$\\
        \midrule
        $x_0x_1x_2$ & 1 & - & - & - & - &- &- & -\\
        $x_1x_2$ & $x_0$ & 1 & - & - & - &- &- & -\\
        $x_0x_2$ & $x_1$ & - & 1 & - & - &- &- & -\\
            $x_2$ & $x_0x_1$ & $x_1$ & $x_0$ & 1 & - &- &- & -\\ 
            $x_0x_1$ & $x_2$ & - & - & - & 1 & - & - & - \\
             $x_1$ & $x_0x_2$ & $x_2$ & - & - & $x_0$ & 1 & - & - \\
             $x_0$ & $x_1x_2$ & - & $x_2$& - & $x_1$ & - & 1& - \\
             $1$ &$x_0x_1x_2$ & $x_1x_2$ & $x_0x_2$ & $x_2$ &  $x_0x_1$ & $x_1$ & $x_0$ & 1\\
             \bottomrule
        \end{tabular}
        }
        \caption{ Each elements represents the column monomial divided by the row monomial $g/f$ when possible; while when divisibility is not possible the symbol is $-$ used. }
        \label{tab:contruct-M-id}
    \end{table}

    If we replace now each time we obtain a monomial with $1$ and each time we obtain $-$ with $0$ we retrieve the matrix $\bM$, which in this case is equal to $\bG_{2^3}.$
    \end{example}
More significant, the matrix $\bM$ is invertible.

\begin{lemma}
   $\bM$ is a lower triangular square matrix, with $M_{i,i}=1$ for all $i\in[0,2^{m}-1].$ Thus, $\bM$ is invertible over $\ft.$ 
\end{lemma}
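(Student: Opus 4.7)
The plan is to reduce the lemma to two structural assertions: (i) $M_{i,j}=0$ whenever $j>i$, and (ii) $M_{i,i}=1$. Once both are established, $\bM$ is a triangular matrix over $\ft$ with unit diagonal, so $\det(\bM)=1$ and invertibility follows automatically.

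The first substantive step I would carry out is a bit-level rewriting of the divisibility condition appearing in \eqref{eq:matrix-freq}. By the mapping \eqref{eq:mapping}, if $\ev(g_k)=\bGN[k]$ then $g_k=\mathbf{x}^{\bin(2^m-1-k)}$ and therefore $\widecheck{g_k}=\mathbf{x}^{\bin(k)}$; similarly $\widecheck{f}=\mathbf{x}^{\bin(i)}$ for the row-index monomial $f$ with $\ev(f)=\bGN[i]$. Consequently
\[
\widecheck{g_k}\mid\widecheck{f}\iff \supp(\bin(k))\subseteq\supp(\bin(i)),
\]
and this bit-inclusion in particular forces $k\le i$ as integers.

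I would then apply this to the column expansion $P_j=\sum_{k\in\supp(\bP[j])}g_k$ coming from $\ev(P_j)=\bP\bGN[j]$. Since $\bP$ is upper triangular with unit diagonal, $\supp(\bP[j])\subseteq [j,N-1]$, so every $g_k\in\terms(P_j)$ satisfies $k\ge j$. When $j>i$ this gives $k>i$ for each contributing term, the bit-inclusion fails uniformly, no term is counted, and $M_{i,j}=0$. For the diagonal entry $j=i$, the two constraints $k\ge i$ (from $\bP$) and $k\le i$ (from the bit-inclusion) force $k=i$; the normalisation $p_0=1$ guarantees $g_i=f\in\terms(P_i)$, $\widecheck{f}\mid\widecheck{f}$ is trivial, and exactly one term contributes, yielding $M_{i,i}=1$.

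The only genuinely non-routine step is the bit-level reformulation of multiplicative-complement divisibility; once that is stated cleanly, the triangular pattern and the diagonal value are immediate consequences of the support structure of $\bP$ together with $p_0=1$. I do not anticipate any serious obstacle beyond carefully reconciling the two index conventions at play, namely that the rows of $\bGN$ index monomials in decreasing order while multiplicative complements act by bit-flipping the same indices.
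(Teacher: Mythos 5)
Your proof is correct and follows essentially the same route as the paper: the paper's own argument is just a terse appeal to its Lemmas on forward orthogonality and backward divisibility, whose content is exactly your observation that upper-triangularity of $\bP$ forces every term of row $j$ of $\bP\bGN$ to have index at least $j$, while $\widecheck{g}\mid\widecheck{f}$ forces the index to be at most $i$, with $p_0=1$ giving the single diagonal contribution. Your bit-level (submask) restatement simply makes explicit, in a self-contained way, what the paper imports from those lemmas.
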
     

\begin{proof}
    It is straightforward using Lemma \ref{lem:forward-not} and \ref{lema:backward-parity-condition} that $\bM$ is lower triangular. The fact that $\bM$ has only ones on the diagonal comes from Lemma \ref{lem:forward-not} and the definition of $\bP.$  
\end{proof}

Using all these we can infer a bit more on this matrix. 
\begin{proposition}\label{pr:M-GP}
    \begin{equation}
        \bM=\bGN\bP^t.
    \end{equation}
\end{proposition}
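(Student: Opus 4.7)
The plan is to unfold the definition of $M_{i,j}$, translate the divisibility condition on complements into an ordinary divisibility condition, and then recognize the resulting sum as the $(i,j)$-entry of $\bGN\bP^t$.

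First I would dualize the complement condition. Writing $\ev(f)=\bGN[i]$ with $f=f_i=\mathbf{x}^{\bin(2^m-1-i)}$, the definition of the multiplicative complement gives $\ind(\widecheck{g})=[0,m-1]\setminus\ind(g)$, and similarly for $\widecheck{f_i}$. Consequently
\[
\widecheck{g}\mid \widecheck{f_i}
\;\Longleftrightarrow\; \ind(\widecheck{g})\subseteq\ind(\widecheck{f_i})
\;\Longleftrightarrow\; \ind(f_i)\subseteq\ind(g)
\;\Longleftrightarrow\; f_i\mid g,
\]
so \eqref{eq:matrix-freq} can be rewritten as
\[
M_{i,j}=\bigl|\{g\in\terms(P)\mid f_i\mid g\}\bigr|\pmod 2,\qquad \ev(P)=\bP\bGN[j].
\]

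Second, I would describe $\terms(P)$ in terms of the row $\bP[j,\cdot]$. Since $\ev(P)=\bP\bGN[j]=\sum_{l}\bP[j,l]\,\bGN[l]$ and each row $\bGN[l]$ is exactly $\ev(f_l)$, we have $P=\sum_{l}\bP[j,l]\,f_l$ in $\Rm$; because the $f_l$'s form the monomial basis, $g\in\terms(P)$ iff $g=f_l$ for some $l$ with $\bP[j,l]=1$. Counting over these indices, the integer
\[
\bigl|\{g\in\terms(P)\mid f_i\mid g\}\bigr|=\sum_{l=0}^{N-1}\bP[j,l]\,[f_i\mid f_l],
\]
where $[\cdot]$ is the Iverson bracket, so modulo $2$,
\[
M_{i,j}=\sum_{l=0}^{N-1}[f_i\mid f_l]\,\bP[j,l]\pmod 2.
\]

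Third, I would identify the divisibility indicator with an entry of $\bGN$. This is precisely the content of Lemma \ref{lem:M-trivial}: taking $\bP=\bId$ in the construction of $\bM$ gives $M_{i,l}=[f_i\mid f_l]$, but the same lemma asserts $\bM=\bGN$, hence $\bGN[i,l]=[f_i\mid f_l]$. (Equivalently, this follows directly from the Kronecker structure of $\bGN$ and the bijection $i\leftrightarrow \bin(2^m-1-i)$.) Substituting,
\[
M_{i,j}=\sum_{l=0}^{N-1}\bGN[i,l]\,\bP^t[l,j]=(\bGN\bP^t)_{i,j},
\]
which is the claimed identity.

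The only delicate point is the bookkeeping on the complement: one must apply $\widecheck{(\cdot)}$ consistently and observe that it flips divisibility, which is what lets us transition from a condition on $\widecheck{g},\widecheck{f}$ to a condition on $f,g$ that matches the divisibility pattern encoded by $\bGN$. Once that is in place, the identification with $\bGN\bP^t$ is just reading off a matrix product.
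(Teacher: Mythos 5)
Your proof is correct and follows essentially the same route as the paper, whose proof is just the one-line instruction to combine the definition of $\bM$ with Lemma \ref{lem:M-trivial}; you have simply made explicit the two steps that instruction hides (flipping $\widecheck{g}\mid\widecheck{f}$ into $f\mid g$ and reading $\terms(P)$ off the row $\bP[j,\cdot]$) before identifying the divisibility indicator with the entries of $\bGN$ via that lemma. Nothing further is needed.
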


\begin{proof}
    Use definition of $\bM$ and Lemma \ref{lem:M-trivial}.
\end{proof}
\paragraph{Dual of upper polynomial polar code}
The dual of an upper polynomial polar code can now be revealed.

\begin{theorem}\label{thm:dual-poly-polar}
     We have
     \begin{equation}
         \C_{\bP\bGN}(\A)^{\bot}=\C_{{P^{t}}^{-1}\bGN\bQ_{\pi}\bGN}(\A^c).
     \end{equation}
\end{theorem}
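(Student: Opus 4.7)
The plan is to reinterpret the matrix $\bM$ defined in \eqref{eq:matrix-freq} as the cross-Gram matrix of scalar products between rows of $\bQ_{\pi}\bGN$ and rows of $\bP\bGN$, then biorthogonalize via $\bM^{-1}$ to produce a parity-check matrix of $\C_{\bP\bGN}(\A)$, and finally read off the stated closed form using the algebraic identities of Lemma~\ref{lem:prop-perm-psi-G}.

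First, I would argue that $\bM = (\bQ_{\pi}\bGN)(\bP\bGN)^t$. Row $i$ of $\bQ_{\pi}\bGN$ is row $2^m-1-i$ of $\bGN$, hence $\ev(\mathbf{x}^{\bin(i)}) = \ev(\widecheck{f_i})$, where $\ev(f_i)=\bGN[i]$. By Lemma~\ref{lem:duality-cond-on-max-mon}, the scalar product $\ev(\widecheck{f_i})\cdot \ev(P_j)$ equals the coefficient of $x_0\cdots x_{m-1}$ in the product $\widecheck{f_i}\cdot P_j$, which by the multiplicative-complement identity $\widecheck{f_i}\cdot g = x_0\cdots x_{m-1} \Leftrightarrow f_i \mid g \Leftrightarrow \widecheck{g}\mid \widecheck{f_i}$ collapses to $|\{g\in\terms(P_j):\widecheck{g}\mid \widecheck{f_i}\}|\bmod 2 = M_{i,j}$. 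Combined with the already-established factorization $\bM = \bGN\bP^t$ from Proposition~\ref{pr:M-GP}, this gives two expressions for the same matrix that we can use interchangeably.

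Next, set $\mathbf{D} \triangleq \bM^{-1}(\bQ_{\pi}\bGN)$. Then
\[
\mathbf{D}\cdot(\bP\bGN)^t = \bM^{-1}\cdot(\bQ_{\pi}\bGN)(\bP\bGN)^t = \bM^{-1}\bM = \bId,
\]
so the scalar product of row $i$ of $\mathbf{D}$ with row $j$ of $\bP\bGN$ is $\delta_{i,j}$. In particular, for every $i\in\A^c$ and $j\in\A$ (hence $i\neq j$) the two rows are orthogonal. Since $\mathbf{D}$ is invertible (product of invertible factors, with $\bM$ invertible by the lemma preceding Proposition~\ref{pr:M-GP}), its rows indexed by $\A^c$ are linearly independent and span an $(N-K)$-dimensional subspace of $\C_{\bP\bGN}(\A)^{\bot}$; equality follows by dimension counting since $\dim \C_{\bP\bGN}(\A)^{\bot} = N-K$.

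It remains to simplify $\mathbf{D}$: using Proposition~\ref{pr:M-GP} and the involutive property $\bGN^{-1}=\bGN$ from Lemma~\ref{lem:prop-perm-psi-G},
\[
\mathbf{D} = (\bGN\bP^t)^{-1}\bQ_{\pi}\bGN = (\bP^t)^{-1}\bGN^{-1}\bQ_{\pi}\bGN = (\bP^t)^{-1}\bGN\bQ_{\pi}\bGN,
\]
matching the statement. The main conceptual step is the very first one — recognizing $\bM$ as a scalar-product matrix rather than just a combinatorial bookkeeping device — after which everything reduces to routine algebra with invertible matrices.
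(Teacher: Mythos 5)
Your proof is correct and follows essentially the same route as the paper: both construct the dual basis as the rows of $\bM^{-1}\bQ_{\pi}\bGN$ indexed by $\A^c$, use the invertibility of $\bM$ together with a dimension count, and finish by simplifying via Proposition~\ref{pr:M-GP} and Lemma~\ref{lem:prop-perm-psi-G}. Your explicit Gram-matrix identification $\bM=(\bQ_{\pi}\bGN)(\bP\bGN)^{t}$ (obtained from Lemma~\ref{lem:duality-cond-on-max-mon}) simply makes fully explicit the orthogonality step that the paper phrases informally as a ``backward linear combination'' encoded by $\varepsilon_i\bM^{-1}$.
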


\begin{proof}
    To determine a polynomial that gives an orthogonal vector over the entire code $\C_{\bP\bGN}(\A)$ we first select indices $i\in\A^c$ and create the canonical vector $\varepsilon_i=(0,\dots,0,1,0,\dots,0)$ having only one $1$ in the position $i.$ Then we compute a vector $\bv$ satisfying $\bv\bM=\varepsilon_i.$ The vector $\bv$ encodes the backward linear combination required for monomial $\bx^{\bin(i)}$ in order to generate a vector which is orthogonal on all codewords of $\C_{\bp\bGN}(\A).$ This can be done since $\bM$ is invertible, and thus we have $\bv=\varepsilon_i\bM^{-1}.$ Also, except for the trivial case when the dimension of the code is $2^m$ and the dual is thus degenerate, the dual always exists since its dimension equals $\A^c.$ This is because all polynomials defining the dual code are linearly independent, i.e, $\{\varepsilon_i\bM^{-1}| i  \in \A^c\}.$ 
    
    To select the polynomials indexed by the frozen set we need to apply our transformation to the monomial basis $\widecheck{\bx^{2^m-1-i}}=\bx^{i}.$ Hence, one polynomial from the dual code is simply  $\varepsilon_i\bM^{-1}\bQ_{\pi}\bGN.$ 

    Use Proposition \ref{pr:M-GP}, and Lemma \ref{lem:prop-perm-psi-G} to obtain the final result. 
\end{proof}

Recall now that when $\bP=\bId$ we have by Lemma \ref{lem:M-trivial} that $\bM=\bGN$, which implies $\C_{\bGN}(\A)^{\bot}=\C_{\bGN\bQ_{\pi}\bGN}(\A^c).$

\begin{corollary}
    We have  \begin{equation}
         \C_{\bP\bGN}(\A)^{\bot}=\C_{{\bP^{t}}^{-1}\bQ_{\pi}\bGN\bQ_{\pi}}(\A^c).
     \end{equation}
\end{corollary}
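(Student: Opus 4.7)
The plan is to derive this corollary as an immediate consequence of Theorem \ref{thm:dual-poly-polar} combined with the fourth identity in Lemma \ref{lem:prop-perm-psi-G}. Recall that Theorem \ref{thm:dual-poly-polar} gives us
\[
\C_{\bP\bGN}(\A)^{\bot}=\C_{{\bP^{t}}^{-1}\bGN\bQ_{\pi}\bGN}(\A^c),
\]
so all that remains is to rewrite the generator matrix on the right-hand side in the form $\bP^{t\,-1}\bQ_{\pi}\bGN\bQ_{\pi}$.

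The key step is to invoke the identity $\bGN\bQ_{\pi}\bGN=\bQ_{\pi}\bGN\bQ_{\pi}$ from Lemma \ref{lem:prop-perm-psi-G}. Multiplying both sides on the left by the invertible matrix $\bP^{t\,-1}$ (which exists since $\bP$ is upper triangular with unit diagonal, and hence so are $\bP^{t}$ and $\bP^{t\,-1}$) yields
\[
\bP^{t\,-1}\bGN\bQ_{\pi}\bGN=\bP^{t\,-1}\bQ_{\pi}\bGN\bQ_{\pi}.
\]
Since the code generated by a full-rank matrix depends only on its row space, replacing the generator matrix by a left multiple by an invertible matrix does not change the code, so the information-set description on the index set $\A^c$ is preserved under this substitution.

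Combining the two observations, we obtain
\[
\C_{\bP\bGN}(\A)^{\bot}=\C_{\bP^{t\,-1}\bGN\bQ_{\pi}\bGN}(\A^c)=\C_{\bP^{t\,-1}\bQ_{\pi}\bGN\bQ_{\pi}}(\A^c),
\]
which is precisely the claim. There is no real obstacle here: the statement is a cosmetic reformulation of Theorem \ref{thm:dual-poly-polar} that makes the dual's structure more transparent, since $\bQ_{\pi}\bGN\bQ_{\pi}$ is the matrix whose $i$-th row is $\ev(\psi(\mathbf{x}^{\bin(i)}))$ (as noted in the remark following Lemma \ref{lem:prop-perm-psi-G}), thereby exhibiting the dual as a polynomial code in the $\psi$-transformed monomial basis, pre-multiplied by the lower-triangular factor $\bP^{t\,-1}$.
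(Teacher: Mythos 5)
Your proof is correct and matches the paper's intended argument: the corollary is stated without a separate proof precisely because it follows from Theorem~\ref{thm:dual-poly-polar} by substituting the identity $\bGN\bQ_{\pi}\bGN=\bQ_{\pi}\bGN\bQ_{\pi}$ from Lemma~\ref{lem:prop-perm-psi-G}. One small remark: after left-multiplying that identity by ${\bP^{t}}^{-1}$ the two generator matrices are literally equal, so your additional appeal to invariance of the code under invertible left multiplication is unnecessary (and, as phrased, would not be valid in general when only the rows indexed by $\A^c$ are selected); the matrix equality alone suffices.
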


\begin{remark}
    Since $\bQ_{\pi}\bGN$ is a monomial matrix, and since ${\bP^{t}}^{-1}\bGN$ is an lower triangular invertible matrix, the dual of an upper polynomial polar code is a lower polynomial polar code.  
\end{remark}


    

Let's gather all the results on decreasing monomial sub-codes for polynomial polar and their duals into one final result. 
\begin{proposition}
    Let $\C_{\bP\bGN}(\A)$ be an upper polynomial polar code. Then we have
    \begin{multline}
        \C_{\bGN}([\max(\A^c)+1,N-1])\subseteq \C_{\bP\bGN}(\A)\\\C_{\bP\bGN}(\A)\subseteq \C_{\bGN}([\min(\A),N-1]).
    \end{multline}
        
Moreover,
    \begin{multline}
        \C_{\bGN}([N-1-\min(\A),N-1])\subseteq \C_{\bP\bGN}(\A)^{\bot}\\
        \C_{\bP\bGN}(\A)^{\bot}\subseteq \C_{\bGN}([N-1-\max(\A^c),N-1]).
    \end{multline}
        
\end{proposition}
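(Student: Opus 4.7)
The plan is to derive the proposition as a wrap-up: the two direct inclusions are already in hand from earlier in the section, and the two dual inclusions will follow by taking orthogonals and invoking the dual formula~\eqref{eq:dual-monomial-dragoi} for decreasing monomial codes.

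Phase 1 (the direct inclusions) requires no new work. The containment $\C_{\bGN}([\max(\A^c)+1,N-1])\subseteq \C_{\bP\bGN}(\A)$ is exactly the Proposition proved just before the first simulation (its statement says the last $N-1-\max(\A^c)$ rows of $\bGN$ already lie in $\C_{\bP\bGN}(\A)$), and $\C_{\bP\bGN}(\A)\subseteq \C_{\bGN}([\min(\A),N-1])$ is Lemma~\ref{lem:max-monom}.

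For Phase 2 I would dualize both direct inclusions and reverse containment, obtaining
\[
  \C_{\bGN}([\min(\A),N-1])^{\bot}\subseteq \C_{\bP\bGN}(\A)^{\bot}\subseteq \C_{\bGN}([\max(\A^c)+1,N-1])^{\bot}.
\]
The remaining task is to identify the two outer codes with decreasing monomial codes. The key ingredient I would establish is that for every threshold $j\in[0,N-1]$ the index interval $[j,N-1]$ is a decreasing monomial set. To prove this, pick $f$ with row index $i_f\geq j$ and any $g\preceq f$, write $g\preceq_{\mathbf{sh}} g^{*}\weako f$ for a suitable intermediate $g^{*}\in\Mon$, and translate both relations via the index mapping $i\mapsto \bx^{\bin(N-1-i)}$: divisibility $g^{*}\mid f$ gives $\ind(g^{*})\subseteq\ind(f)$, while the shift $g\preceq_{\mathbf{sh}}g^{*}$ gives a componentwise inequality on the ordered supports. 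Each of these monotonizes the integer value of the bit indicator and therefore produces $i_g\geq i_{g^{*}}\geq i_f\geq j$, so $g$ lies in $[j,N-1]$.

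Once $[j,N-1]$ is recognized as decreasing, \eqref{eq:dual-monomial-dragoi} yields
\[
  \C_{\bGN}([j,N-1])^{\bot}=\C_{\bGN}(\widecheck{[0,j-1]})=\C_{\bGN}([N-j,N-1]),
\]
since complementation $\widecheck{i}=N-1-i$ maps $[0,j-1]$ to $[N-j,N-1]$. Specializing $j=\min(\A)$ and $j=\max(\A^c)+1$ produces the two dual intervals, finishing the proof. The only mild obstacle is the decreasing property of $[j,N-1]$; everything else is assembly of earlier results together with the $\widecheck{i}=N-1-i$ index bookkeeping (which is also where one should double-check the endpoints of the two dual intervals stated in the proposition).
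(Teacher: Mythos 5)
Your proposal is correct and rests on the same two ingredients as the paper's proof: the two primal inclusions already established (the decreasing-monomial subcode proposition and Lemma~\ref{lem:max-monom}), and the fact that an index interval $[j,N-1]$ corresponds to a decreasing monomial set, so that \eqref{eq:dual-monomial-dragoi} gives $\C_{\bGN}([j,N-1])^{\bot}=\C_{\bGN}([N-j,N-1])$. Where you differ is the logical direction, and your version is actually the cleaner one: the paper's proof of the containment $\C_{\bP\bGN}(\A)\subseteq\C_{\bGN}([\min(\A),N-1])$ starts from the dual-side inclusion $\C_{\bGN}([N-\min(\A),N-1])\subseteq\C_{\bP\bGN}(\A)^{\bot}$, which it never justifies (and which is itself one of the claims to be proved), whereas you obtain both dual inclusions by dualizing the two primal ones, so nothing is assumed beyond the earlier results. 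Your side remark about endpoints is also well taken: dualizing Lemma~\ref{lem:max-monom} yields $\C_{\bGN}([N-\min(\A),N-1])\subseteq\C_{\bP\bGN}(\A)^{\bot}$, which is what the paper's proof text uses, while the proposition as printed says $[N-1-\min(\A),N-1]$; that larger interval is in general too big (already for $N=2$, $\A=\{1\}$, $\bP$ the nontrivial upper unitriangular matrix, it would force the whole space into the dual of a one-dimensional code), so the statement carries an off-by-one that your derivation silently corrects. Your sketch that $[j,N-1]$ is decreasing (via $g\preceq_{\mathbf{sh}}g^{*}\weako f$ and monotonicity of the index under both relations) is the same argument the paper runs, only phrased directly rather than by contradiction, and it is needed here for the threshold $j=\min(\A)$ as well, which the paper's earlier proposition only covered for $j=\max(\A^c)+1$.
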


    \begin{proof}
        We have $\C_{\bGN}([N-\min(\A),N-1])\subseteq \C_{\bP\bGN}^{\bot}\Rightarrow \C_{\bP\bGN}\subseteq \C_{\bGN}([N-\min(\A),N-1])^{\bot}.$ Since $\C_{\bGN}([N-\min(\A),N-1])$ is a decreasing monomial code we have $\C_{\bGN}([N-\min(\A),N-1])^{\bot}=\C_{\bGN}([\min(\A),N-1]).$ 

        The same can be applied to $\C_{\bGN}([\max(\A^c)+1,N-1])\subseteq \C_{\bP\bGN}(\A)\rightarrow \C_{\bP\bGN}(\A)^{\bot} \subseteq \C_{\bGN}([\max(\A^c)+1,N-1])^{\bot}= \C_{\bGN}([N-1-\max(\A^c),N-1]).$
    \end{proof}

\colorlet{O1green}{green!30!orange}
\colorlet{Ogreen}{green!10!orange}
\begin{table}[!ht]
    \centering
    \resizebox{0.48\textwidth}{!}{
    \begin{tabular}{c|c|c|c}
    \toprule
    $N-k$ & $\wm$& Lower bound& $W_{\wm}(\C_{\bGN}(\A))$\\
    \midrule
       \rowcolor{Ogreen}  2 & 2 & 240 & 240 \\

\rowcolor{Ogreen} 3 & 2 & 112 & 112 \\

\rowcolor{Ogreen} 4 & 2 & 48 & 48 \\

\rowcolor{O1green} 5 & 2 & 16 & 16 \\

6 & 4 & 120 & 1240 \\

7 & 4 & 120 & 728 \\

8 & 4 & 120 & 472 \\

9 & 4 & 120 & 344 \\

10 & 4 & 120 & 216 \\

11 & 4 & 120 & 152 \\

12 & 4 & 56 & 88 \\

\rowcolor{Ogreen} 13 & 4 & 56 & 56 \\

\rowcolor{Ogreen} 14 & 4 & 24 & 24 \\

\rowcolor{Ogreen} 15 & 4 & 8 & 8 \\

\rowcolor{O1green} 16 & 4 & 8 & 8 \\

17 & 8 & 28 & 364 \\

18 & 8 & 28 & 236 \\

19 & 8 & 28 & 172 \\

20 & 8 & 28 & 108 \\

21 & 8 & 28 & 76 \\

22 & 8 & 28 & 44 \\

\rowcolor{Ogreen} 23 & 8 & 28 & 28 \\

\rowcolor{Ogreen} 24 & 8 & 12 & 12 \\

\rowcolor{O1green} 25 & 8 & 4 & 4 \\

26 & 16 & 6 & 62 \\

27 & 16 & 6 & 30 \\

28 & 16 & 6 & 14 \\

\rowcolor{Ogreen} 29 & 16 & 6 & 6 \\

\rowcolor{Ogreen} 30 & 16 & 2 & 2 \\

\rowcolor{O1green} 31 & 32 & 1 & 1 \\
\bottomrule
    \end{tabular}
    }
    \caption{Minimum distance, number of minimum weight codewords of the polar code and the sub-code of any generalized polynomial polar code in function of the number of frozen bits for $N=32$}
    \label{tab:lower-bound-min-codewords-32}
\end{table}

\section{Mininum distance and minimum weight codewords}

    Using the algebraic structure of the code, we can also study the minimum distance and the number of minimum‑weight codewords of an upper polynomial polar code. 
    From \cite{li2,rowshan2023minimum,zunker2024enumeration},  the minimum distance of a PAC code is at least that of the underlying plain polar code. The proof  in \cite[Lemma 1]{rowshan2023minimum} holds for any upper polynomial polar code. Here, we make a stronger claim: the minimum distance is preserved. 

    \begin{proposition}
        Let $\C_{\bP\bGN}(\A)$ be an upper polynomial polar code, with $\A$ a decreasing monomial set, and suppose $\C_{\bGN}(\A)$ has minimum distance $2^{m-r}.$ Then, the minimum distance of $\C_{\bP\bGN}(\A)$ is also $2^{m-r}$ and the number of minimum weight codewords is lower bounded by $2^{r}$ for any code rate and length.   
    \end{proposition}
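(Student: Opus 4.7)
My plan rests on identifying a canonical degree-$r$ monomial whose full translation orbit is forced to lie in the code, simultaneously witnessing the minimum weight and producing the counting bound. The natural candidate is the leftmost monomial $f=x_0x_1\cdots x_{r-1}$. Since $\I$ contains at least one monomial of degree $r$ (the hypothesis on the minimum distance forces this), and since $f\preceq g$ for every degree-$r$ monomial $g$, the decreasing property immediately gives $f\in\I$.

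The decisive structural step is the identity $i_f=2^m-2^r$, read directly from \eqref{eq:mapping}. It implies that for every index $j\in(i_f,N-1]$ the binary representation of $2^m-1-j$ lies strictly below $2^r$, so its support sits inside $\{0,\dots,r-1\}$ and the monomial at index $j$ divides $f$; the decreasing property of $\I$ then places it in $\I$. Hence $[i_f,N-1]\subseteq\A$. Since $\bP$ is upper triangular with unit diagonal, so is $\bP^{-1}$, and for any $j\in[i_f,N-1]$ one can write $\bGN[j]=\sum_{k\geq j}(\bP^{-1})_{j,k}\bP\bGN[k]$ with every index $k$ in $\A$; therefore the decreasing monomial code $\C_{\bGN}([i_f,N-1])$ embeds in $\C_{\bP\bGN}(\A)$.

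Both conclusions now follow at once from this embedding. First, $\ev(f)$ is a codeword of weight $2^{m-r}$, which combined with the lower bound inherited from \cite[Lemma~1]{rowshan2023minimum} pins the minimum distance to $2^{m-r}$. Second, because the partition $\lambda_f$ is empty, \eqref{eq:A_wm} gives $|\Alow_f\cdot f|=2^r$, and each orbit element is a translation $\prod_{i=0}^{r-1}(x_i+\varepsilon_i)$ with $\bve\in\ft^r$ whose expansion produces only divisors of $f$. Hence every orbit element evaluates inside $\C_{\bGN}([i_f,N-1])\subseteq\C_{\bP\bGN}(\A)$, and since the action of $\Alow_f$ is weight-preserving and injective on evaluations, we obtain $2^r$ distinct codewords of weight $2^{m-r}$.

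The only delicate point is the inclusion $[i_f,N-1]\subseteq\A$, which fuses a short binary arithmetic identity about $\bin(2^m-1-j)$ with the decreasing hypothesis on $\I$; once it is in place, the upper triangularity of $\bP^{-1}$ supplies the embedding for free, and the bound $2^r$ is just the size of the translation-only subgroup of $\Alow_f$ acting on the corner monomial, so no additional machinery is required.
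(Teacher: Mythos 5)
Your proposal is correct and follows essentially the same route as the paper: identify $x_0\cdots x_{r-1}\in\I$ via the decreasing property, use its index $2^m-2^r$ to show $[2^m-2^r,N-1]\subseteq\A$, embed the decreasing monomial subcode $\C_{\bGN}([2^m-2^r,N-1])$ into $\C_{\bP\bGN}(\A)$ (the paper cites its earlier containment proposition where you re-derive it from the upper triangularity of $\bP^{-1}$), invoke \cite[Lemma~1]{rowshan2023minimum} for the reverse distance inequality, and count the $2^r$ minimum-weight codewords via the translation orbit, exactly as formula \eqref{eq:A_wm} gives in the paper.
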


    \begin{proof}
        If $\C_{\bGN}(\A)$ has minimum distance $2^{m-r}$ then $\ev(x_0\cdot x_{r-1})\in \C_{\bGN}(\A)$ due to the fact that $\C_{\bG}(\A)$ is a decreasing monomial code. Indeed, if there is any monomial $f\in\I$ of degree $r$, which evaluates to a minimum weight codeword, then $x_0\cdot x_{r-1}\preceq f.$ Since the row-index corresponding to $x_0\cdots x_{m-1}$ is $2^r+\dots + 2^{m-1}$, the decreasing monomial code $\C_{\bGN}([2^{m}-2^r,2^{m}-1])$ is a sub-code of $\C_{\bGN}(\A).$ We do have that $\C_{\bGN}([2^m-2^r,2^m-1])$ is the code generated by all monomials which are smaller than or equal to $x_0\cdots x_{r-1}$ with respect to $\preceq.$ 
        We thus have $[2^m-2^r,2^m-1]\subset \A$ which implies $\max(\A^c)>2^m-2^r$ and finally $[2^m-2^r,2^m-1]\subset [\max(\A^c)+1,N-1].$ We can infer \[\C_{\bGN}([2^m-2^r,N-1])\subset \C_{\bP\bGN}(\A).\]
        This means that the minimum distance of $\C_{\bP\bGN}$ is lower than or equal to $2^{m-r}.$ Using Lemma 1 from \cite{rowshan2023minimum} we deduce the wanted equality. 
        Also the code $\C_{\bGN}([2^m-2^r,N-1])$ has $2^r$ minimum weight codewords which ends our proof.
        
    \end{proof}

\begin{remark}
    If we take a closer look at the proof of our result we notice that we have a lower bound on the number of minimum weight codewords of an upper polynomial polar code at each code-rate which is stronger than $2^r$, the universal lower bound. Indeed, $W_{\wm}(\C_{\bGN}([2^m-2^r,N-1]))$ can be computed for each code-rate individually. We report these results in Table \ref{tab:lower-bound-min-codewords-32} for $N=32$ and Table \ref{tab:lower-bound-min-codewords-64} for $N=64.$ These quantities were computed using the formula for decreasing monomial codes. 
\end{remark}

\begin{example}
    For the $(64,48)$ polar code the number of minimum weight ($2^2$) codewords is $432=2^{4}+2^{4+1}+2^{4+2},2^{4+3}+2^{4+2}+2^{4+3}$ since we have $x_0x_1x_2x_3,x_0x_1x_2x_4,x_0x_1x_3x_4,x_0x_2x_3x_4, x_0x_1x_2x_5$, $x_0x_1x_3x_5\in\I.$ 

    The number of minimum weight codewords of the PAC code with $\bp=[1,0,1,1]$ equals $320.$ Now consider the decreasing monomial sub-code of the PAC code, namely $\C_{\bGN}([\max(\A^c)+1,N-1])=\C_{\bGN}([34,63]).$ This code has maximum degree monomials $x_0x_1x_2x_3,x_0x_1x_2x_4,x_0x_1x_3x_4,x_0x_2x_3x_4$ which give a total number of minimum weight codewords equal to $240.$ Hence, for this code rate we cannot have less than $240$ codewords of weight 4 (row $N-k=64-48=16$ in Table \ref{tab:lower-bound-min-codewords-64}). 
    
\end{example}

\begin{table}[!ht]
    \centering
    \resizebox{0.4\textwidth}{!}{
    \begin{tabular}{c|c|c|c}
    \toprule
    $N-k$ & $\wm$& Lower bound & $W_{\wm}(\C_{\bGN}(\A))$\\
    \midrule

\rowcolor{Ogreen} 2 & 2 & 992 & 992 \\

\rowcolor{Ogreen} 3 & 2 & 480 & 480 \\

\rowcolor{Ogreen} 4 & 2 & 224 & 224 \\

\rowcolor{Ogreen} 5 & 2 & 96 & 96 \\

\rowcolor{O1green} 6 & 2 & 32 & 32 \\

\rowcolor{O1green} 7 & 2 & 32 & 32 \\

8 & 4 & 496 & 6320 \\

9 & 4 & 496 & 4272 \\

10 & 4 & 496 & 3248 \\

11 & 4 & 496 & 2224 \\

12 & 4 & 496 & 1712 \\

13 & 4 & 496 & 1200 \\

14 & 4 & 496 & 944 \\

15 & 4 & 496 & 688 \\

16 & 4 & 240 & 432 \\

17 & 4 & 240 & 304 \\

18 & 4 & 112 & 176 \\

19 & 4 & 112 & 176 \\

\rowcolor{Ogreen} 20 & 4 & 112 & 112 \\

\rowcolor{Ogreen} 21 & 4 & 48 & 48 \\

\rowcolor{Ogreen} 22 & 4 & 48 & 48 \\

\rowcolor{O1green} 23 & 4 & 16 & 16 \\

\rowcolor{O1green} 24 & 4 & 16 & 16 \\

\rowcolor{O1green} 25 & 4 & 16 & 16 \\

\rowcolor{O1green} 26 & 4 & 16 & 16 \\

27 & 8 & 120 & 2456 \\

28 & 8 & 120 & 1944 \\

29 & 8 & 120 & 1432 \\

30 & 8 & 120 & 1176 \\

31 & 8 & 120 & 920 \\

32 & 8 & 120 & 664 \\

33 & 8 & 120 & 536 \\

34 & 8 & 120 & 408 \\

35 & 8 & 120 & 280 \\

36 & 8 & 120 & 216 \\

37 & 8 & 120 & 152 \\

38 & 8 & 120 & 152 \\

39 & 8 & 56 & 88 \\

\rowcolor{Ogreen} 40 & 8 & 56 & 56 \\

\rowcolor{Ogreen} 41 & 8 & 24 & 24 \\

\rowcolor{Ogreen} 42 & 8 & 24 & 24 \\

\rowcolor{O1green} 43 & 8 & 8 & 8 \\

\rowcolor{O1green} 44 & 8 & 8 & 8 \\

\rowcolor{O1green} 45 & 8 & 8 & 8 \\

46 & 16 & 28 & 556 \\

47 & 16 & 28 & 428 \\

48 & 16 & 28 & 300 \\

49 & 16 & 28 & 236 \\

50 & 16 & 28 & 172 \\

51 & 16 & 28 & 108 \\

52 & 16 & 28 & 76 \\

53 & 16 & 28 & 44 \\

\rowcolor{Ogreen} 54 & 16 & 28 & 28 \\

\rowcolor{Ogreen} 55 & 16 & 12 & 12 \\

\rowcolor{O1green} 56 & 16 & 4 & 4 \\

\rowcolor{O1green} 57 & 16 & 4 & 4 \\

58 & 32 & 6 & 62 \\

59 & 32 & 6 & 30 \\

60 & 32 & 6 & 14 \\

\rowcolor{Ogreen} 61 & 32 & 6 & 6 \\

\rowcolor{O1green} 62 & 32 & 2 & 2 \\

\rowcolor{O1green} 63 & 64 & 1 & 1 \\
\bottomrule
    \end{tabular}
    }
    \caption{Minimum distance, number of minimum weight codewords of the polar code and the sub-code of any generalized polynomial polar code in function of the number of frozen bits for $N=64$}
    \label{tab:lower-bound-min-codewords-64}
\end{table}

\begin{remark}
    For any code length it exist some dimensions $k\in[0,N-1]$ for which no improvements in the number of minimum weight codewords can be achieved, by any upper polynomial polar code.

    Such polar codes are either exactly those containing only one maximum degree monomial, i.e., $\I_r=\{x_0\cdots x_{r-1}\}$ or those containing the maximum degree monomials in the code $\C_{\bGN}([\max(A^c+1),N-1]).$ 
\end{remark}

Let's provide a more detailed analysis of the results reported in Table \ref{tab:lower-bound-min-codewords-64} from bottom to top. We exclude the trivial cases $r=m$ and $r=0.$ 
\begin{itemize}
    \item Max. degree monomials $r=1\Rightarrow x_0$ belongs to the polar code. In this case $\wm=2^{6-1}=32$, and we have $5$ code rates for this case, corresponding to $\I=\{1,x_0\},\I=\{1,x_0,x_1\}, \I=\{1,x_0,x_1,x_2\}, \I=\{1,x_0,x_1,x_2,x_3\}, \I=\{1,x_0,x_1,x_2,x_3,x_4\}.$ 
    The first order Reed-Muller code is not here is because $x_0x_1$ is more reliable than $x_5.$ \\
    Row 62 in Table \ref{tab:lower-bound-min-codewords-64} corresponds to $\I=\{1,x_0\}$ hence the plain polar code is equal to any upper triangular polynomial polar code. The same holds for row 63 which corresponds to $\I=\{1,x_0,x_1\}.$
   Just after this we will no longer have equality since all these polar codes have $x_0x_1$ frozen while $x_2,x_3$ or $x_4$ are in $\I.$ Hence, $x_0x_1$ is a the gap where we notice the first jump.
   \item Max. degree monomials $r=2\Rightarrow x_0x_1$ belongs to the polar code. In this case $\wm=2^{6-2}=16$, and we have $12$ code rates for this case. The first four codes provide equal minimum weight codewords for both plain polar and upper polynomial polar codes. The argument is identical with the case $r=1.$ Indeed, the first four codes will have 
   \begin{itemize}
       \item $x_0x_1\in\I_2\Rightarrow 2^{2}=4$ minimum weight codewords. This is the case for two code rates, namely $\I=\{1,x_0,x_1,x_2,x_3,x_4,x_0x_1\}$ and $\I=\{1,x_0,x_1,x_2,x_3,x_4,x_5,x_0x_1\}.$
       \item $x_0x_1,x_0x_2\in \I_2\Rightarrow 2^{2}+2^{2+1}=12$ minimum weight codewords. Here we have $\I=\{1,x_0,x_1,x_2,x_3,x_4,x_5,x_0x_1,x_0x_2\}.$
       \item $x_0x_1,x_0x_2,x_1x_2\in \I_2\Rightarrow 2^{2}+2^{2+1}+2^{2+2}=28$ minimum weight codewords. Here we have $\I=\{1,x_0,x_1,x_2,x_3,x_4,x_5,x_0x_1,x_0x_2,x_1x_2\}.$
   \end{itemize}
   The gap here is provided by the monomial $x_0x_1x_2$ which is not of degree $2$ but it has a larger index compared to any other degree monomial $x_ix_j$ where at least one of the variables is $x_3\preceq x_i.$ 
   \item Max. degree monomials $r=3\Rightarrow x_0x_1x_2$ belongs to the polar code. In this case $\wm=2^{6-3}=8$, and we have $19$ code rates for this case. The first six polar codes have the same number of minimum weight codewords with the upper polynomial polar codes. 
   \begin{itemize}
       \item $x_0x_1x_2\in\I_3$ gives $2^3=8$ minimum weight codewords. there are 3 codes having only this monomial of degree 3. Up to the next monomial of degree 3 $x_0x_1x_3$ there are 2 monomials in between or equivalently two indices, $x_0x_3,x_1x_3.$ 
       \item $x_0x_1x_3\in\I_3$ gives $2^{3+1}=16$ plus $8$ (from $x_0x_1x_2$) makes $24$ minimum weight codewords. There are only two codes rates with this number of minimum weight codewords, since after $x_0x_1x_3$ the upcoming row corresponds to $x_2x_3.$ 
       \item $x_0x_2x_3$ gives $2^{3+2}=32$ minimum weight codewords, and together with the previous $24$, we obtain $56.$ There is only one polar code with such minimum weight contribution. Here, we notice that we have two code rates for which the lower bound is $56$ while the polar code is one equal to the bound and two strictly bigger. The reason why the second polar code we obtain $88$ minimum weight codewords is because the monomial $x_0x_1x_4$ was considered. This means that there is a frozen pattern between $x_0x_2x_3$ and $x_0x_1x_4.$ Now, why $x_0x_1x_4$ was taken before $x_1x_2x_3$? The reason is because we ordered channels with respect to beta-expansion and we have considered $\beta=2^{1/4}$, which makes $x_0x_1x_4$ more reliable than $x_1x_2x_3.$ 
        After this we notice that the gap arises again from $x_0x_1x_2x_3.$  
   \end{itemize}
\end{itemize}

We compute the code dimensions along with the number ofminimum weight codewords for only those parameters ($N-k$) where is impossible for any upper polynomial polar code to improve the weight distribution of the plain polar code. We report the results for $N=128$ in Table \ref{tab:non-capable-128}.
\begin{table}[!ht]
    \centering
    \resizebox{0.3\textwidth}{!}{
    \begin{tabular}{c|c|c}
    \toprule
    $N-k$ & $\wm$& $W_{\wm}(\C_{\bGN}(\A))$\\
         \midrule
2 & 2 & 4032 \\

3 & 2 & 1984 \\

4 & 2 & 960 \\

5 & 2 & 448 \\

6 & 2 & 192 \\

7 & 2 & 192 \\

8 & 2 & 64 \\

9 & 2 & 64 \\

10 & 2 & 64 \\

11 & 2 & 64 \\

31 & 4 & 224 \\

32 & 4 & 224 \\

33 & 4 & 96 \\

34 & 4 & 96 \\

35 & 4 & 96 \\

36 & 4 & 96 \\

37 & 4 & 96 \\

38 & 4 & 32 \\

39 & 4 & 32 \\

40 & 4 & 32 \\

41 & 4 & 32 \\

42 & 4 & 32 \\

43 & 4 & 32 \\

70 & 8 & 112 \\

71 & 8 & 112 \\

72 & 8 & 48 \\

73 & 8 & 48 \\

74 & 8 & 48 \\

76 & 8 & 16 \\

77 & 8 & 16 \\

78 & 8 & 16 \\

79 & 8 & 16 \\

80 & 8 & 16 \\

81 & 8 & 16 \\

101 & 16 & 56 \\

102 & 16 & 24 \\

103 & 16 & 24 \\

104 & 16 & 24 \\

105 & 16 & 8 \\

106 & 16 & 8 \\

107 & 16 & 8 \\

108 & 16 & 8 \\

118 & 32 & 28 \\

119 & 32 & 12 \\

120 & 32 & 4 \\

121 & 32 & 4 \\

125 & 64 & 6 \\

126 & 64 & 2 \\

127 & 128 & 1\\
\bottomrule
    \end{tabular}
}
    \caption{Co-dimensions for impossible improvement with upper polynomial polar codes for $N=128.$}
    \label{tab:non-capable-128}
\end{table}

Notice that co-dimension for where impossible improvements are revealed usually come in consecutive sequences, e.g., for $N=128$ we have $N-k$ in $[1,11]\cup[31,43]\cup([70,81]\setminus\{75\})\cup[101,108]\cup[118,121]\cup[125,127].$ We have computed this list for $N=256$ as well and we have obtained
$[1,16]\cup[49,73]\cup[123,128]\cup[136,147]\cup[188,204]\cup[228,235]\cup[248,249]\cup[253,255].$
\def\ranks{0, 1, 2, 6, 3, 8, 9, 18, 4, 10, 11, 21, 13, 23, 25, 37, 5, 12, 14, 24, 16, 27, 29, 41, 19, 30, 32, 43, 35, 46, 48, 56, 7, 15, 17, 28, 20, 31, 33, 44, 22, 34, 36, 47, 39, 49, 51, 58, 26, 38, 40, 50, 42, 52, 53, 59, 45, 54, 55, 60, 57, 61, 62, 63}

\begin{figure*}
    \centering
\begin{tikzpicture}[scale=0.25]

\foreach \rank [count=\col from 0] in \ranks {
    \pgfmathtruncatemacro{\bzero}{int(mod(\col/32,2))}
    \pgfmathtruncatemacro{\bone}{int(mod(\col/16,2))}
    \pgfmathtruncatemacro{\btwo}{int(mod(\col/8,2))}
    \pgfmathtruncatemacro{\bthree}{int(mod(\col/4,2))}
    \pgfmathtruncatemacro{\bfour}{int(mod(\col/2,2))}
    \pgfmathtruncatemacro{\bfive}{int(mod(\col,2))}

    \node[rotate=90, anchor=south] at (\col+1.3,1)
        {\ttfamily \scriptsize\bzero\bone\btwo\bthree\bfour\bfive};
}

\foreach \row in {1,...,64} {
    \foreach \rank [count=\col from 0] in \ranks {
        \ifnum\rank<\row
            \def\cellcolor{white}
        \else
            \def\cellcolor{red}
        \fi
        \fill[\cellcolor] (\col,-\row) rectangle ++(1,-1);
        \draw[black] (\col,-\row) rectangle ++(1,-1);
    }
}

\foreach \r in {1,...,64} {
    \node[anchor=east] at (-0.4,-\r-0.5) {\ttfamily \r};
}

\end{tikzpicture}
    \caption{Polar code profile for dimension $k\in\{1,\dots,N\}$ (represented on rows) for $N=64.$ Red boxes are frozen indices ($\A^c$) while white are information indices ($\A$). Computed using $\beta$-expansion with $\beta=2^{1/4}.$}
    \label{fig:polar_profile}
\end{figure*}
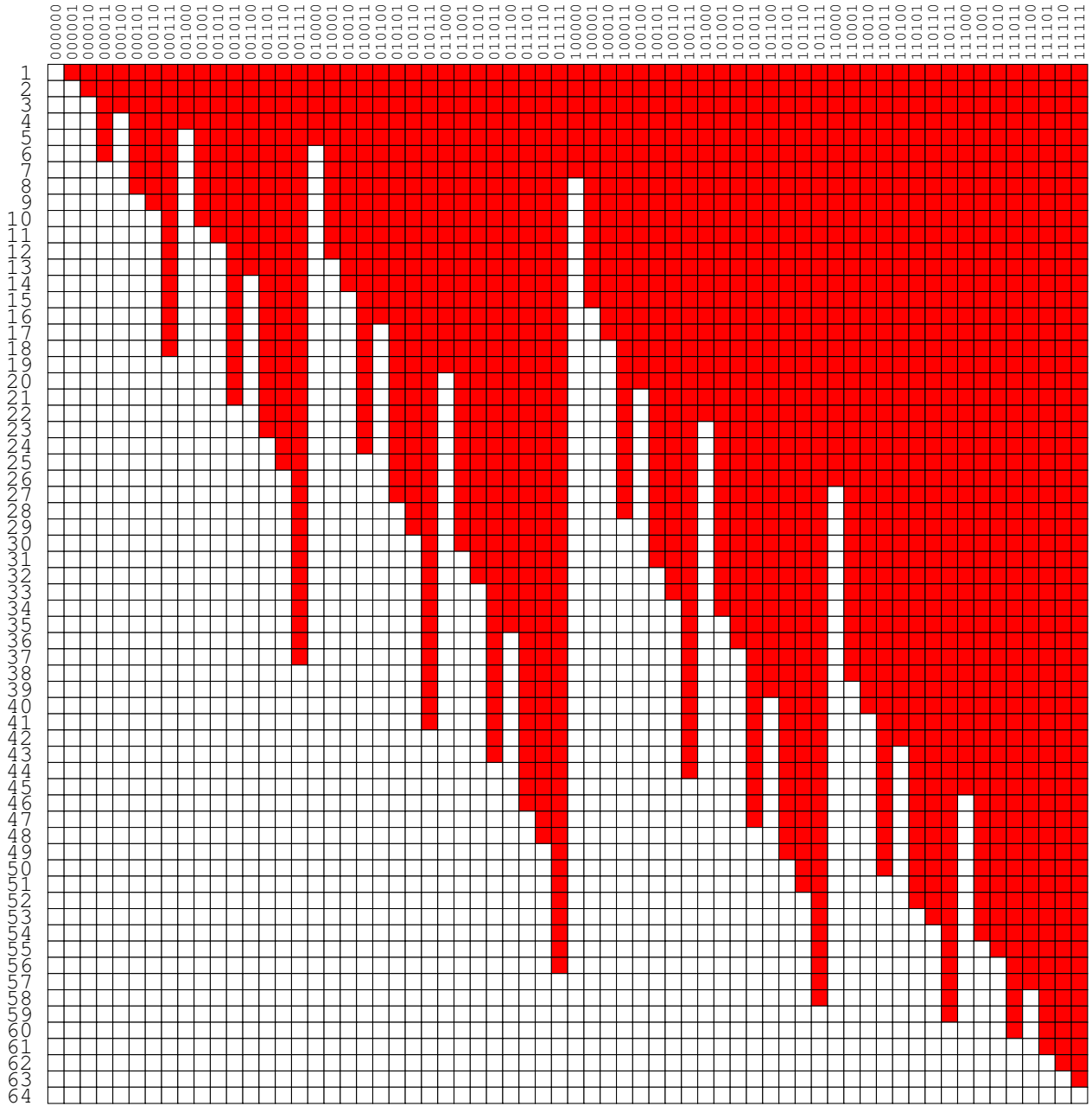








\printbibliography

\end{document}